\newcounter{itm}
\newenvironment{myprotocol}[1]
  {\begin{minipage}{\columnwidth} 
    \begin{framed}\hspace{0ex} 
     \begin{minipage}{0.99\columnwidth}
       {\bf #1:}
       \setcounter{itm}{1}
       \begin{list}{\arabic{itm}.}{\usecounter{itm}
          \setlength{\itemsep}{0mm}
          \setlength{\leftmargin}{\labelwidth}
          \setlength{\topsep}{\parsep}}}           
   {    \end{list}
       \vspace{-1.5ex} 
       \end{minipage} 
     \end{framed} 
    \end{minipage}\vspace{-0.6ex}}
\newenvironment{myfigure}[1]    
         {\begin{figure}[#1] \centering}
         { \end{figure}}
\renewcommand{\S}{\ensuremath{{\sf S}}}
\newcommand{\U}{\ensuremath{{\sf U}}}
\newcommand{\dS}{\ensuremath{{\sf S}^*}}
\newcommand{\dU}{\ensuremath{{\sf U}^*}}
\newcommand{\QID}{{\sf \textit{Q-ID}}\xspace}
\newcommand{\xQID}{{\sf \textit{Q-ID}$^+$}\xspace}
\newcommand{\epsclose}[1][\varepsilon]{\approx_{#1}}
\newcommand{\close}[1]{\approx_{#1}}
\newcommand{\negl}{negl}
\newcommand{\set}[1]{\{#1\}}
\newcommand{\Set}[2]{\{ #1 : #2\}}
\newcommand{\eps}{\varepsilon}
\newcommand{\E}{\mathrm{E}}
\newcommand{\regE}{E}
\newcommand{\regS}{\ensuremath{E}_{\dS}}
\newcommand{\regU}{\ensuremath{E}_{\dU}}
\newcommand{\code}{\mathfrak{c}}
\newcommand{\W}{{\cal W}}
\newcommand{\mX}{\mathcal{X}}
\newcommand{\mY}{\mathcal{Y}}
\newcommand{\ev}{{\cal E}}
\newcommand{\Hmin}[1][]{H_{\min}^{#1}}
\newcommand{\guess}{p_\mathrm{guess}}
\newcommand*{\assign}{\ensuremath{\kern.5ex\raisebox{.1ex}{\mbox{\rm:}}\kern -.3em =}}
\newcommand{\delete}[1]{}
\newcommand{\remove}[1]{}
\newcommand{\myparagraph}[1]{\medskip\noindent{\sc #1}}
\def\version$#1,v #2 #3/#4/#5 #6${#2 (#5-#4-#3)}
\title{
Secure Identification and QKD in the \\ Bounded-Quantum-Storage
Model\thanks{A preliminary version of this paper appeared in
  \emph{Advances in Cryptology---CRYPTO 2007}~\cite{DFSS07}.}
}
\author{Ivan Damg{\aa}rd\inst{1} \and Serge Fehr\inst{2}\fnmsep
  \thanks{Supported by the Dutch Organization for Scientific Research
    (NWO).}  \and Louis Salvail\inst{3}\and Christian
  Schaffner\inst{2}\fnmsep\thanks{supported by EU fifth framework
    project QAP IST 015848 and the NWO VICI project 2004-2009.} }
\institute{
  DAIMI, Aarhus University, Denmark\\
  \email{ivan@cs.au.dk} \and
  Centrum Wiskunde \& Informatica (CWI) Amsterdam, The Netherlands\\
  \email{\{s.fehr|c.schaffner\}@cwi.nl} \and
  Universit\'e de Montr\'eal (DIRO), QC, Canada\\
  \email{salvail@iro.umontreal.ca} }
\authorrunning{I.B.~Damg{\aa}rd, S.~Fehr, L.~Salvail,  and C.~Schaffner}
\begin{document}

\maketitle

\setcounter{footnote}{0}

\begin{abstract}
  We consider the problem of secure identification: user $\U$ proves
  to server $\S$ that he knows an agreed (possibly low-entropy) password $w$, while
  giving away as little information on
  $w$ as possible, namely the adversary can exclude at most one possible password
  for each execution of the scheme. We
  propose a solution in the bounded-quantum-storage model, where $\U$
  and $\S$ may exchange qubits, and a dishonest party is assumed to
  have limited quantum memory. No other restriction is posed upon the
  adversary.
  An improved version of the proposed identification scheme is also
  secure against a man-in-the-middle attack, but requires $\U$ and
  $\S$ to additionally share a high-entropy key $k$. However, 
  security is still guaranteed if one party loses
  $k$ to the attacker but notices the loss. In both versions of the
  scheme, the honest participants need no quantum memory, and
  noise and imperfect quantum sources can be tolerated.
  The schemes compose sequentially, and $w$
  and $k$ can securely be re-used.
A small modification to the identification scheme results in a
quantum-key-distribution (QKD) scheme, secure in the
bounded-quantum-storage model, with the same re-usability properties
of the keys, and without assuming authenticated channels. This is in
sharp contrast to known QKD schemes (with unbounded adversary) without
authenticated channels, where authentication keys must be updated, and
unsuccessful executions can cause the parties to run out of keys. 
\end{abstract}

\section{Introduction}

\myparagraph{Secure Identification. } Consider two parties, a {\em
  user} $\U$ and a {\em server} $\S$, who share a common secret-key (or
password or Personal Identification Number PIN) $w$. In order to
obtain some service from $\S$, $\U$ needs to convince $\S$ that he is
the legitimate user $\U$ by ``proving'' that he knows $w$. In
practice---think of how you prove to the ATM that you know your
PIN---such a proof is often done simply by announcing $w$ to $\S$.
This indeed guarantees that a dishonest user $\dU$ who does not know
$w$ cannot identify himself as $\U$, but of course incurs the risk
that $\U$ might reveal $w$ to a malicious server $\dS$ who may now
impersonate $\U$. Thus, from a secure identification scheme we also
require that a dishonest server $\dS$ obtains (essentially) no
information on $w$.

There exist various approaches to obtain secure identification
schemes, depending on the setting and the exact security
requirements. For instance zero-knowledge proofs (and some weaker
versions), as initiated by Feige, Fiat and Shamir~\cite{FS86,FFS87}, allow for secure identification. In a more sophisticated model, where we allow the common key $w$ to be of low entropy and additionally consider a man-in-the-middle attack, we can use techniques from password-based key-agreement (like~\cite{KOY01,GL03}) to obtain secure identification schemes.
Common to these approaches is that security relies on the assumption that some computational problem (like factoring or computing discrete logs) is hard and that the attacker has limited computing power.

\myparagraph{Our Contribution. }
In this work, we take a
new approach: we consider quantum communication,
and we
develop two identification schemes which are information-theoretically
secure under the {\em sole} assumption that the attacker can only reliably store
quantum states of limited size. This model was first considered in~\cite{DFSS05}.
On the other hand, the honest participants only need to send qubits
and measure them immediately upon arrival, no quantum storage or
quantum computation is required. Furthermore, our identification
schemes are robust to both noisy quantum channels and
imperfect quantum sources. Our schemes can therefore be implemented
in practice using existing technology.

The first scheme is secure against dishonest users and servers but not against 
a man-in-the-middle attack. It allows the common secret-key $w$ to be non-uniform 
and of low entropy, like a human-memorizable password.
Only a user knowing $w$ can succeed in convincing the server. In any 
execution of this scheme, a dishonest user or server cannot 
learn more on $w$ than excluding one possibility, which is unavoidable. 
This is sometimes referred to as {\em password-based} identification. 
The second scheme requires in addition to $w$ a uniformly distributed high-entropy common secret-key $k$, but is additionally secure against a man-in-the-middle attack. Furthermore, security against a dishonest user or server holds as for the first scheme even if the dishonest party knows $k$ (but not~$w$). This implies that $k$ can for instance be stored on a smart card, and security of the scheme is still guaranteed even if the smart card gets stolen, assuming that the affected party notices the theft and thus does not engage in the scheme anymore. Both schemes compose sequentially, and $w$ (and $k$) may be safely re-used super-polynomially many times, even if the identification fails (due to an attack, or due to a technical failure). 

A small modification of the second identification scheme results in a
quantum-key-distribution (QKD) scheme
secure against bounded-quantum-memory adversaries. The advantage of the proposed
new QKD scheme is that no authenticated channel is needed and the
attacker can {\em not} force the parties to run out of
authentication keys.  The honest parties merely need to share a
password $w$ and a high-entropy secret-key $k$, which they can safely
re-use (super-polynomially many times), independent of whether QKD
succeeds or fails. Furthermore, like for the identification scheme,
losing $k$ does not compromise security as long as the loss is
noticed by the corresponding party.  One may think of this as a quantum
version of password-based authenticated key exchange. The properties of
our solution are in sharp contrast to all
known QKD schemes without authenticated channels (which do not
pose any restrictions on the attacker). In these schemes, an
attacker can force parties to run out of authentication keys by making
the QKD execution fail (e.g.~by blocking some messages). Worse,
even if the QKD execution fails only due to technical problems, the parties
can still run out of authentication keys after a short while, since they
cannot exclude that an eavesdropper was in fact present. This problem
is an important drawback of QKD implementations, especially of those
susceptible to single (or few) point(s) of failure~\cite{EPT03}.

\myparagraph{Other Approaches. }
We briefly discuss how our identification schemes compare with other approaches. We have already given some indication on how to construct {\em computationally} secure identification schemes. This approach typically allows for very practical schemes, but requires some unproven complexity assumption. Another interesting difference between the two approaches: whereas for (known) computationally-secure password-based identification schemes the underlying computational hardness assumption needs to hold indefinitely, the restriction on the attacker's quantum memory in our approach only needs to hold {\em during} the execution of the identification scheme, actually only at one single point during the execution. In other words, having a super-quantum-storage-device at home in the basement only helps you cheat at the ATM if you can communicate with it on-line quantumly -- in contrast to a computational solution, where an off-line super-computer in the basement can make a crucial difference.

Furthermore, obtaining a satisfactory identification scheme requires {\em some} restriction on the adversary, even in the quantum setting: considering only passive attacks, Lo~\cite{Lo97} showed that for an unrestricted
adversary, no  password-based quantum identification scheme exists.
In fact, Lo's impossibility result only applies  
if the user $\U$ is guaranteed not to learn anything about the outcome of the identification procedure. 
We can argue, however, that a different impossibility result holds even without Lo's restriction: 
We first show that secure computation of a classical
{\sc and} gate (in which both players learn the output) can be reduced to a password-based identification scheme. The reduction works as follows. 
Let $w_{0}$, $w_{0}'$ and $w_1$ be three distinct elements from ${\cal W}$. If Alice has private input $x_A=0$ then she sets $w_A=w_0$ and if $x_A=1$ then she sets $w_A = w_1$, and if Bob has private input $x_B=0$ then he sets $w_B=w_0'$ and if $x_B=1$ then he sets $w_B = w_1$. 
Then, Alice and Bob run the identification scheme on inputs $w_A$ and $w_B$, and if the identification is rejected, the output is set to $0$ while
if it is accepted, the output is set to $1$. Security of the identification scheme is easily seen to imply security of the {\sc and} computation.
Now, the secure computation of an {\sc and} gate---with statistical
security and using quantum communication---can be shown to require a
superpolynomial number of rounds if the adversary is
unbounded~\cite{NPS07}.  Therefore, the same must hold for a secure
password-based identification scheme.\footnote{In fact, we believe
  that the proof from~\cite{NPS07} can be extended to cover secure
  computation of equality of strings, which is equivalent to
  password-based identification. This would mean that we could prove
  the impossibility result directly, without the detour via a secure
  AND computation. 
}. In fact, in very recent work~\cite{BCS09prep}, using the definitions from~\cite{FS09},
it is  shown that the whole password of the honest player leaks to the dishonest player.

Another alternative approach is the classical bounded-storage model~\cite{Maurer90,CCM98,ADR02}. In contrast to our approach, only classical communication is used, and it is assumed that the attacker's {\em classical} memory is bounded. 
Unlike in the quantum case where we do not need to require the honest players to have any 
quantum memory, the classical bounded-storage model requires honest
parties to have a certain amount of memory which is related to the
allowed memory size of the adversary: if two legitimate users need $n$
bits of memory in an identification protocol meeting our security
criterion, then an adversary must be bounded in memory to $O(n^2)$
bits. The reason is that given a secure password-based identification
scheme, one can construct (in a black-box manner) a key-distribution
scheme that produces a one-bit key on which the adversary has an
(average) entropy of $\frac{1}{2}$.  On the other hand it is known
that in any key-distribution scheme which requires $n$ bits of memory
for legitimate players, an adversary with memory $\Omega(n^2)$ can obtain
the key except for an arbitrarily small amount of remaining
entropy~\cite{DM04}. It follows that password-based identification
schemes in the classical bounded-storage model can only be secure
against adversaries with memory at most $O(n^2)$.
This holds even for identification schemes with only passive 
security and without security against 
man-in-the-middle attacks. 
Roughly, the reduction works as follows. 
Alice and Bob agree on a public set of two keys $\{w_0,w_1\}$. Alice picks $a\in_R\{0,1\}$,
Bob picks $b\in_R\{0,1\}$, and they run the identification scheme
with keys $w_a$ and $w_b$ respectively.  The outcome
of the identification is then made public from which Bob
determines $a$. We argue that if the identification fails, i.e.\ $a \neq b$, then $a$ is a secure bit.  Thus, on average, $a$ has entropy (close to) $\frac{1}{2}$ from an eavesdropper's point of view.
Consider $w' \not\in \set{w_0,w_1}$. 
By the security
property of the identification scheme, Alice  and thus also a passive eavesdropper
Eve cannot distinguish between Bob having used $w_b$ or $w'$. Similarly, 
we can  then switch Alice's key $w_a$ to $w_{1-a}$ and Bob's switched key $w'$
to $w_{1-b}$ without changing Eve's view. 
Thus, Eve cannot distinguish an execution with $a = 0$ from one with $a = 1$ if $a \neq b$.

This limitation of the classical bounded-storage model is in sharp contrast 
with what we achieve in this paper, the honest players need no quantum memory at all
while our identification scheme remains secure against adversaries with 
quantum memory linear in the total number of qubits sent. The same separation
between the two models was shown for OT and bit commitment~\cite{DFSS05,DFRSS07}.

Finally, if one settles for the bounded-quantum-storage model, then in principle one could take a generic construction for general two-party secure-function-evaluation (SFE) based on OT together with the OT scheme from~\cite{DFSS05,DFRSS07} in order to implement a SFE for string equality and thus password-based identification. However, this approach leads to a highly impractical solution, as the generic construction requires many executions of OT, whereas our solution is comparable with {\em one} execution of the OT scheme from~\cite{DFSS05,DFRSS07}. Furthermore, SFE does not automatically take care of a man-in-the-middle attack, thus additional work would need to be done using this approach. 

\myparagraph{Subsequent Work.}  The difficulty of storing quantum
information can also be modeled differently from assuming a bound on
the physical number of qubits an adversary can control. In the more
realistic noisy-quantum-storage model put forward in~\cite{WST08}, all
incoming qubits can be stored by an adversary but are subject to
storage noise. Assuming a simple storage strategy, one can show that
the protocols in the current paper remain secure~\cite{STW08arxiv},
whereas it is unknown if security still holds in case of more 
sophisticated storage strategies~\cite{KWW09arxiv}.

If the storage limitation on the adversary fails to hold, it is easy
to see that not only will our security proofs fail, but in fact the
protocol we propose can be broken quite efficiently. However, it was
recently shown, in~\cite{DFLSS09full}, how to add a ``preamble'' to
the protocol using a commitment scheme based on a computational
assumption. It is shown in~\cite{DFLSS09full} that to break the
resulting protocol, an adversary must have both large quantum memory
{\em and} large computing power.

\section{Preliminaries}
\subsection{Notation and Terminology}

\myparagraph{Quantum States. } We assume the reader's familiarity with
basic notation and concepts of quantum information
processing~\cite{NC00}.  In this paper, the computational or
$+\,$-basis is defined by the pair $\{ \ket{0}, \ket{1} \}$ (also
written as $\{ \ket{0}_+, \ket{1}_+ \}$). The pair $\{ \ket{0}_\times,
\ket{1}_\times \}$ denotes the diagonal or $\times$-basis, where
$\ket{0}_\times = (\ket{0} + \ket{1}) / \sqrt{2}$ and $\ket{1}_\times
= (\ket{0} - \ket{1}) / \sqrt{2}$. We write $\ket{x}_\theta =
\ket{x_1}_{\theta_1} \otimes \cdots \otimes \ket{x_n}_{\theta_n}$ for
the $n$-qubit state where string $x = (x_1,\ldots,x_n) \in \{0,1\}^n$
is encoded in bases $\theta = (\theta_1,\ldots,\theta_n) \in
\{+,\times\}^n$.

The behavior of a (mixed) quantum state in a register $\regE$ is fully
described by its density matrix~$\rho_\regE$.  In order to simplify
language, we tend to be a bit sloppy and use $\regE$ as well as
$\rho_\regE$ as ``naming'' for the quantum state.  We often consider
cases where a quantum state $\regE$ may depend on some classical
random variable $X$ (from a finite set $\mX$) in that the state is
described by the density matrix $\rho_\regE^x$ if and only if $X =
x$. For an observer who has only access to the state $\regE$ but not
to $X$, the behavior of the state is determined by the density matrix
$\rho_{\regE} \assign \sum_x P_X(x) \rho_\regE^x$, whereas the joint
state, consisting of the classical $X$ and the quantum state $\regE$,
is described by the density matrix $\rho_{X\regE} \assign \sum_x
P_X(x) \proj{x} \otimes \rho_\regE^x$, where we understand
$\set{\ket{x}}_{x \in \mX}$ to be the standard (orthonormal)
basis of $\C^{|\mX|}$.
More general, for any event ${\cal E}$ (defined by $P_{\ev|X}(x) = P[\ev|X\!=\!x]$ for all $x$), we write 
\begin{equation}\label{eq:states}
\rho_{X\regE|{\cal E}} \assign \sum_x P_{X|{\cal E}}(x) \proj{x} \otimes \rho_\regE^x
\quad\text{and}\quad
\rho_{\regE|{\cal E}} \assign \tr_X(\rho_{X\regE|{\cal E}}) = \sum_x P_{X|{\cal E}}(x)\rho_\regE^x \enspace .
\end{equation}
We also write $\rho_X \assign \sum_x P_X(x) \proj{x}$ for the quantum representation of the classical random variable $X$ (and similarly for $\rho_{X|{\cal E}}$). 
This notation extends naturally to quantum states that depend on several classical random variables, defining the density matrices $\rho_{XY\regE}$, $\rho_{XY\regE|\ev}$, $\rho_{Y\regE|X=x}$ etc.
We tend to slightly abuse notation and write $\rho^x_{Y\regE} = \rho_{Y\regE|X=x}$ and $\rho^x_{Y\regE|\ev} = \rho_{Y\regE|X=x,\ev}$, as well as $\rho^x_{\regE} = \tr_Y(\rho^x_{Y\regE})$ and $\rho^x_{\regE|\ev} = \tr_Y(\rho^x_{Y\regE|\ev})$.\footnote{The density matrix $\rho^x_{\regE|\ev}$ describes the quantum state $\regE$ in the case that the event $\ev$ occurs and $X$ takes on the value $x$. The corresponding holds for the other density matrices considered here. } 
Note that writing $\rho_{X\regE} = \tr_Y(\rho_{XY\regE})$ and $\rho_{\regE} = \tr_{X,Y}(\rho_{XY\regE})$ is consistent with the above notation. We also write $\rho_{X\regE|\ev} = \tr_Y(\rho_{XY\regE|\ev})$ and $\rho_{\regE|\ev} = \tr_{X,Y}(\rho_{XY\regE|\ev})$, where one has to be aware that in contrast to \eqref{eq:states}, here the state $\regE$ may depend on the event $\ev$ (namely via $Y$), so that, e.g., $\rho_{\regE|\ev} = \sum_x P_{X|{\cal E}}(x)\rho_{\regE|\ev}^x$. 
Given a quantum state $\regE$ that depends on a classical random variable $X$, by saying that there exists a random variable $Y$ such that $\rho_{XY\regE}$ satisfies some condition, we mean that $\rho_{X\regE}$ can be understood as $\rho_{X\regE} = \tr_Y(\rho_{XY\regE})$ for some $\rho_{XY\regE}$ (with classical~$Y$) and that $\rho_{XY\regE}$ satisfies the required condition.\footnote{This is similar to the case of distributions of classical random variables where given $X$ the existence of a certain $Y$ is understood that there exists a certain joint distribution $P_{XY}$ with $\sum_y P_{XY}(\cdot,y) = P_X$. }

$X$ is independent of $\regE$ (in that $\rho^x_\regE$ does not depend on $x$) if and only if $\rho_{X\regE} = \rho_X \otimes \rho_\regE$, which in particular implies
that no information on $X$ can be learned by observing only $\regE$. 
Similarly, $X$ is random and independent of $\regE$ if and only if $\rho_{X\regE} = \frac{1}{|\mX|}\I \otimes \rho_\regE$, where $\frac{1}{|\mX|}\I$ is the density matrix of the fully mixed state of suitable dimension. 
Finally, if two states like $\rho_{X\regE}$
and $\rho_X \otimes \rho_\regE$ are $\eps$-close in terms of their
trace distance $\delta(\rho,\sigma) = \frac{1}{2} \tr(|\rho-\sigma|)$, which we write as $\rho_{X\regE} \epsclose \rho_X \otimes \rho_\regE$,
then the real system $\rho_{X\regE}$ ``behaves'' as the ideal
system $\rho_X \otimes \rho_\regE$ except with probability~$\varepsilon$ in that for any evolution of the system no
observer can distinguish the real from the ideal one with advantage
greater than~$\eps$~\cite{RK05}. 
As $\varepsilon$ can be
interpreted as an error probability, we typically require
$\varepsilon$ to be {\em negligible} in a security parameter $n$, denoted as $\eps = \negl(n)$.
A security parameter is a natural number $n$ given as input to all players in our protocols, and
a probability is said to be negligible in $n$ if for any polynomial $p$, it is smaller than $1/p(n)$ for all sufficiently large $n$.

\myparagraph{Conditional Independence. }
We also need to express that a random variable $X$ is (close to) independent of a quantum state 
$\regE$ {\em when given a random variable $Y$}. This means that when given $Y$, the state $\regE$ gives no (or little) additional information on $X$. Formally, this is expressed by requiring that $\rho_{X Y \regE}$ equals (or is close to) $\rho_{X\leftrightarrow Y \leftrightarrow \regE}$, which is defined as\footnote{The notation is inspired by the classical setting where the corresponding independence of $X$ and $Z$ given $Y$ can be expressed by saying that 
$X \leftrightarrow Y \leftrightarrow Z$ forms a Markov chain. }
$$
\rho_{X\leftrightarrow Y \leftrightarrow \regE} :=  \sum_{x,y}P_{X Y}(x,y)\proj{x} \otimes \proj{y} \otimes \rho_{\regE}^y \, .
$$ 
In other words, $\rho_{X Y \regE} = \rho_{X\leftrightarrow Y \leftrightarrow \regE}$ precisely if $\rho_\regE^{x,y} = \rho_\regE^{y}$ for all $x$ and $y$. 
To further illustrate its meaning, notice that if the $Y$-register is measured and value $y$ is obtained, then the
state $\rho_{X\leftrightarrow Y \leftrightarrow \regE}$ collapses to 
$ (\sum_{x}P_{X|Y}(x|y)\proj{x} )\otimes \rho_{\regE}^y$, so that indeed no further
information on $x$ can be obtained from the $\regE$-register. This notation naturally extends to $\rho_{X\leftrightarrow Y \leftrightarrow \regE|{\cal E}}$ simply by considering $\rho_{X Y \regE|{\cal E}}$ instead of $\rho_{X Y \regE}$. Explicitly, 
 $\rho_{X\leftrightarrow Y \leftrightarrow \regE|{\cal E}} = \sum_{x,y}P_{X Y|\ev}(x,y)\proj{x} \otimes \proj{y} \otimes \rho_{\regE|\ev}^{y}$.

 The notion of conditional independence has been introduced
 in~\cite{DFSS07} (a classical version was independently proposed
 in~\cite{CW08}) and used as a convenient tool in subsequent
 papers~\cite{FS09,BCS09prep}.  In this paper we will use the
 following property of conditional independence whose proof is given
 in Appendix~\ref{app:decomp}.
\begin{lemma}\label{lemma:decomp}
For any event $\ev$, the density matrix $\rho_{X\leftrightarrow Y \leftrightarrow \regE}$ can be decomposed into 
$$
\rho_{X\leftrightarrow Y \leftrightarrow \regE} = P[\ev]^2 \cdot \rho_{X\leftrightarrow Y \leftrightarrow \regE|{\cal E}} + (1-P[\ev]^2) \cdot \tau
$$ 
for some density matrix $\tau$. Furthermore, if $\ev$ is independent of $X$ and $Y$, then
$$
\rho_{X\leftrightarrow Y \leftrightarrow \regE} = P[\ev] \cdot \rho_{X\leftrightarrow Y \leftrightarrow \regE|{\cal E}} + P[\bar{\ev}] \cdot \rho_{X\leftrightarrow Y \leftrightarrow \regE|\bar{\cal E}} \, .
$$ 
\end{lemma}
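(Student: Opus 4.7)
The plan is to exploit the block-diagonal structure of $\rho_{X\leftrightarrow Y \leftrightarrow \regE}$ and $\rho_{X\leftrightarrow Y \leftrightarrow \regE|\ev}$ in the computational basis of the $X$- and $Y$-registers, thereby reducing both identities to elementary statements about their classical coefficients modulated by a positive operator on $\regE$. A key auxiliary observation is that in the Markov chain the $\regE$-marginal given $Y=y$ is stable under further conditioning on any classical event (since $\regE\perp X\mid Y$), so that $\rho_{\regE|\ev}^y$ and $\rho_\regE^y$ coincide (and likewise for $\bar\ev$); this makes the two operators differ block-by-block only in their scalar weights.

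For the first identity, I would set $\tau := (1-P[\ev]^2)^{-1}\bigl(\rho_{X\leftrightarrow Y\leftrightarrow\regE} - P[\ev]^2\,\rho_{X\leftrightarrow Y\leftrightarrow\regE|\ev}\bigr)$ and check that $\tau$ is a density matrix. Trace one is immediate from linearity of trace. For positive semi-definiteness, block $(x,y)$ reduces to $\bigl(P_{XY}(x,y) - P[\ev]^2\,P_{XY|\ev}(x,y)\bigr)\rho_\regE^y$, and the scalar chain
\[
P[\ev]^2\,P_{XY|\ev}(x,y) \;=\; P[\ev]\cdot P\bigl[\ev \cap\{X=x,Y=y\}\bigr] \;\leq\; P[\ev]\cdot P_{XY}(x,y) \;\leq\; P_{XY}(x,y)
\]
shows the scalar factor is nonnegative; multiplying by the PSD operator $\rho_\regE^y$ preserves positivity block by block, so $\tau \geq 0$.

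For the second identity, I would use that $\ev$ independent of $X$ and $Y$ yields $P_{XY|\ev}(x,y)=P_{XY|\bar\ev}(x,y)=P_{XY}(x,y)$, together with the quantum mixture decomposition $\rho_\regE^y = P[\ev]\,\rho_{\regE|\ev}^y + P[\bar\ev]\,\rho_{\regE|\bar\ev}^y$ that follows from $\ev\perp Y$. Substituting both facts into the definitions of the three block-diagonal operators and regrouping produces the claimed identity directly.

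The main obstacle is interpreting the conditional notation $\rho_{\regE|\ev}^y$ consistently: once one recognises that the Markov structure $\regE\perp X\mid Y$ forces this state to agree with $\rho_\regE^y$, both claims collapse to elementary identities on the joint distribution $P_{XY}$. The quadratic $P[\ev]^2$ coefficient in the first part is then transparent as the product of one factor $P[\ev]$ coming from the normalisation $P_{XY|\ev}=P_{XY,\ev}/P[\ev]$ and a second factor $P[\ev]$ coming from bounding $P[\ev\cap\{X=x,Y=y\}]\leq P_{XY}(x,y)$.
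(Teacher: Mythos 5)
Your proof is correct under the same reading of the notation that the paper's own proof relies on, but for the first identity you take a genuinely different route. The paper expands \emph{both} ingredients as $\{P[\ev],P[\bar\ev]\}$-mixtures --- the classical weights via $P_{XY}=P[\ev]P_{XY|\ev}+P[\bar\ev]P_{XY|\bar\ev}$ and the states via $\rho^y_\regE=P[\ev]\rho^y_{\regE|\ev}+P[\bar\ev]\rho^y_{\regE|\bar\ev}$ --- multiplies out, reads off the $P[\ev]^2\cdot\rho_{X\leftrightarrow Y\leftrightarrow\regE|\ev}$ term, and notes that the three remaining cross-terms are $(1-P[\ev]^2)\tau$ with $\tau$ a convex combination of density matrices, so positivity comes for free. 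You instead define $\tau$ as the normalized difference and verify positivity block by block; this costs you the (correct) scalar chain $P[\ev]^2P_{XY|\ev}(x,y)\le P_{XY}(x,y)$ plus the identification $\rho^y_{\regE|\ev}=\rho^y_\regE$, and in exchange it makes the trace-one and PSD checks fully explicit and pinpoints where the quadratic factor comes from. For the second identity your argument coincides with the paper's.

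One caveat, which applies to the paper's proof just as much as to yours: the step $\rho^y_{\regE|\ev}=\rho^y_\regE$ (equivalently the paper's unconditional mixture $\rho^y_\regE=P[\ev]\rho^y_{\regE|\ev}+P[\bar\ev]\rho^y_{\regE|\bar\ev}$, whose correct general form has weights $P[\ev\,|\,Y\!=\!y]$) is justified when the underlying joint state genuinely satisfies $\rho^{x,y}_\regE=\rho^y_\regE$, or when $\ev$ is independent of $Y$. If $\rho_{X\leftrightarrow Y\leftrightarrow\regE}$ is only the formal Markov-ization of a non-Markov state (as in the paper's application to $\rho_{W\leftrightarrow W'\leftrightarrow\regU}$) and $\ev$ is correlated with $Y$, then $\rho^y_{\regE|\ev}$ can differ from $\rho^y_\regE$ and the first decomposition requires more care. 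Since the paper only ever invokes the second identity, where independence of $\ev$ from $X$ and $Y$ is assumed and your mixture argument goes through verbatim, this does not affect the applications; but you should state explicitly which reading of the conditional states you are using, since your ``key auxiliary observation'' is exactly the point where the two readings diverge.
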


\myparagraph{(Conditional) Smooth Min-Entropy. } 
Different notions of conditional (smooth) min-entropy have been proposed in the literature; we briefly specify here the variant that is convenient for us. 
Let $X$ and $Y$ be random variables, over respective
finite alphabets $\mX$ and $\mY$, with joint distribution $P_{XY}$. The
{\em conditional min-entropy} of $X$ given $Y$ is defined as the negative
logarithm of the guessing probability of $X$ given~$Y$: $\Hmin(X|Y)
\assign - \log \bigl( \guess(X|Y) \bigr)$ where 
$$
\guess(X|Y) \assign \sum_y P_Y(y) \max_x
P_{X|Y}(x|y) = \sum_y \max_x
P_{XY}(x,y) 
$$ 
and $\log$ denotes the binary logarithm (here and throughout the paper). 
More generally, we define $\Hmin(X\ev|Y)$ for any event $\ev$ as $\Hmin(X\ev|Y)
\assign - \log \big( \guess(X\ev|Y) \big)$ where%
\footnote{$\guess(X\ev|Y)$ can be understood as the optimal probability in guessing $X$ {\em and} have $\ev$ occur, when given~$Y$. }
$$
\guess(X\ev|Y) \assign \sum_y P_Y(y) \max_x P_{X\ev|Y}(x|y) = \sum_y \max_x P_{XY\ev}(x,y) \, .
$$ 
The \emph{conditional smooth min-entropy}
$\Hmin[\eps](X|Y)$ is then defined as
\[ \Hmin[\eps](X|Y) \assign \max_{\ev}
\Hmin(X \ev|Y) 
\]
where the max is over all events $\ev$ with $P[\ev] \geq 1 - \eps$.

Obviously, the unconditional versions of smooth and non-smooth min-entropy are obtained by using an ``empty'' $Y$; furthermore the above notions extend naturally to $\Hmin(X|Y,\ev)$ and $\Hmin[\eps](X|Y,\ev)$ for any event $\ev$ by considering the corresponding conditional joint distribution $P_{XY|\ev}$.

\subsection{Tools}

\myparagraph{Min-Entropy-Splitting . } A technical tool,
which will come in handy, is the following entropy-splitting lemma,
which may also be of independent interest.  Informally, it says that if for
a list of random variables, every pair has high (smooth) min-entropy,
then all of the random variables except one must have high (smooth)
min-entropy. The proof is given in Appendix~\ref{app:ES}. 

\begin{lemma}[Entropy-Splitting Lemma]\label{lemma:ES}
Let $\eps \geq 0$. Let $X_1,\ldots,X_m$ and $Z$ be random variables such that $\Hmin[\eps](X_i X_j|Z) \geq \alpha$ for all $i \neq j$. Then there exists a random variable $V$ over $\set{1,\ldots,m}$ such that for any {\em independent} random variable $W$ over $\set{1,\ldots,m}$ with $\Hmin(W) \geq 1$,
$$
\Hmin[2 m\eps](X_W|VWZ,V\!\neq\!W) \geq \alpha/2 - \log(m)
 - 1 \, .
$$
\end{lemma}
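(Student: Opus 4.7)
My plan is to construct $V$ by thresholding the posterior distributions $P_{X_i|Z}$. Define $V=V(x_1,\ldots,x_m,z)$ to be the smallest $i\in\set{1,\ldots,m}$ with $P_{X_i|Z}(x_i|z)>t$, where $t:=2^{-\alpha/2}$, with the fallback $V:=1$ if no such $i$ exists. Because $W$ is independent of $V$ and $\Hmin(W)\ge 1$, one has $P[V=W]=\sum_v P_V(v)P_W(v)\le\max_w P_W(w)\le 1/2$ and thus $P[V\ne W]\ge 1/2$ for free. The main task is then to bound, for each $w$, the quantity $\sum_{v\ne w}\sum_z\max_x P[X_w=x,V=v,Z=z]$ by $m\cdot 2^{-\alpha/2}$; averaging over $w$ with weight $P_W(w)$ and dividing by $P[V\ne W]$ then yields $\guess(X_W|VWZ,V\ne W)\le 2m\cdot 2^{-\alpha/2}$, equivalent to $\Hmin(X_W|VWZ,V\ne W)\ge\alpha/2-\log m-1$.

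The per-pair bound is carried out by case analysis on the order of $v$ and $w$. If $v>w$, or if $v=1$ via the fallback, then $V=v$ forces $P_{X_w|Z}(x_w|z)\le t$, so $\max_x P[X_w=x,V=v,Z=z]\le t\,P_Z(z)$ and $\sum_z\max_x P\le t=2^{-\alpha/2}$. If $v<w$, then $V=v$ forces $P_{X_v|Z}(x_v|z)>t$, which restricts $x_v$ to a set of size strictly less than $1/t=2^{\alpha/2}$; the estimate $P[X_w=x,V=v,Z=z]\le P_Z(z)\sum_{x_v:\,P_{X_v|Z}(x_v|z)>t}P_{X_vX_w|Z=z}(x_v,x)$ combined with the pair bound $\sum_z P_Z(z)\max_{x',x''}P_{X_vX_w|Z=z}(x',x'')\le 2^{-\alpha}$ coming from $\Hmin(X_vX_w|Z)\ge\alpha$ then gives $\sum_z\max_x P\le 2^{-\alpha}/t=2^{-\alpha/2}$. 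Summing the at most $m-1$ off-diagonal contributions plus the at-most-one fallback gives the required $m\cdot 2^{-\alpha/2}$.

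The main obstacle is obtaining the smoothing parameter $2m\eps$. The pair bound in the case $v<w$ is available only under an event $\ev_{vw}$ of probability $\ge 1-\eps$ witnessing $\Hmin[\eps](X_vX_w|Z)\ge\alpha$, and a naive union bound over all $\binom{m}{2}$ pairs only delivers smoothing $\binom{m}{2}\eps$, which would inflate to $m(m-1)\eps$ after conditioning on $V\ne W$ --- too large by a factor $\Theta(m)$. The remedy is to exploit the independence of $W$: for each $w$ set $\ev^{(w)}:=\bigcap_{v\ne w}\ev_{vw}$ (so $P[\ev^{(w)}]\ge 1-(m-1)\eps$) and glue this family along $W$ into the single event $\ev:=\bigcup_w\bigl(\set{W=w}\cap\ev^{(w)}\bigr)$. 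By $W$-independence of the $\ev^{(w)}$'s, $P[\ev]=\sum_w P_W(w)P[\ev^{(w)}]\ge 1-(m-1)\eps$, and whenever $\ev$ holds with $W=w$, $\ev_{vw}$ holds for every $v\ne w$. Replacing each $P[X_w=x,V=v,Z=z]$ in the case analysis by $P[X_w=x,V=v,Z=z,\ev_{vw}]$ and using the smoothed pair bound $\sum_z P_Z(z)\max_{x',x''}P[X_v=x',X_w=x'',\ev_{vw}|Z=z]\le 2^{-\alpha}$ preserves the bound $\guess(X_W\ev|VWZ,V\ne W)\le 2m\cdot 2^{-\alpha/2}$, while conditioning inflates the failure probability by at most $1/P[V\ne W]\le 2$, so $P[\ev|V\ne W]\ge 1-2(m-1)\eps\ge 1-2m\eps$, completing the proof.
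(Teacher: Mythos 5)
Your proof is correct and follows essentially the same route as the paper's: the same ``first index with large conditional probability'' construction of $V$, the same two-case bound (indices before vs.\ after $V$, using the pairwise min-entropy bound together with the fact that at most $2^{\alpha/2}$ values can have conditional probability above the threshold), and the same trick of gluing the pairwise smoothing events along the independent $W$ to get $O(m\eps)$ rather than $O(m^2\eps)$. The only cosmetic differences are that the paper sends the fallback to index $m$ rather than $1$ (which avoids your extra fallback term in the $v<w$ case) and defines the glued event as $\ev_{WV}$ rather than $\bigcap_{v\neq w}\ev_{vw}$ on $\{W=w\}$; both variants yield the stated parameters.
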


\myparagraph{Quantum Uncertainty Relation. } At the very core of our
security proofs lies (a special case of) the quantum uncertainty
relation from~\cite{DFRSS07}\footnote{In~\cite{DFRSS07}, a stricter
  notion of conditional smooth min-entropy was used, which in
  particular implies the bound as stated here. }, that lower bounds
the (smooth) min-entropy of the outcome when measuring an arbitrary
$n$-qubit state in a random basis $\theta \in \set{0,1}^n$.

\begin{theorem}[Uncertainty Relation~\cite{DFRSS07}]\label{thm:uncertainty}
Let $\regE$ be an arbitrary fixed $n$-qubit state. Let
$\Theta$ be uniformly distributed over $\set{+,\times}^n$ (independent
of $\regE$), and let $X\in\{0,1\}^n$ be the random variable for the outcome of
measuring $\regE$ in basis~$\Theta$.
Then, for any $\lambda > 0$, the conditional smooth min-entropy is lower bounded by
$$
\Hmin[\eps](X|\Theta) \geq \Big(\frac12 - 2\lambda\Big) n 
$$
with $\eps \leq 2^{-\sigma(\lambda)n}$ and $\sigma(\lambda) = \frac{\lambda^2 \log(e)}{32(2-\log(\lambda))^2}$. 
\end{theorem}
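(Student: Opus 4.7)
My plan is to reduce the $n$-qubit smooth min-entropy statement to a single-qubit entropic uncertainty bound plus a concentration inequality for an appropriate martingale. The guiding intuition is that each qubit, when measured in a uniformly random BB84 basis, contributes at least $\frac{1}{2}$ bit of Shannon uncertainty, and these contributions concentrate tightly enough that after $\eps$-smoothing we retain min-entropy close to $n/2$.

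First I would establish the single-qubit building block: for any density matrix $\rho$ on $\C^2$ and $\Theta$ uniform on $\set{+,\times}$, if $X$ is the outcome of measuring $\rho$ in basis $\Theta$, then $H(X|\Theta) \geq \tfrac{1}{2}$. This follows from the Maassen--Uffink inequality applied to the two BB84 bases (whose maximum overlap is $1/\sqrt{2}$), combined with averaging over $\Theta$; alternatively it can be verified directly by writing $\rho$ in the Bloch representation. Next I would lift this to the $n$-qubit setting by a chain-rule / sequential-measurement argument. Writing $X = (X_1,\ldots,X_n)$ and $\Theta = (\Theta_1,\ldots,\Theta_n)$, I process the qubits one at a time. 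Conditioned on any history $(X_{<i},\Theta_{<i},\Theta_i) = (x_{<i},\theta_{<i},\theta_i)$, the reduced state of qubit $i$ is a well-defined single-qubit density matrix (obtained by partially tracing out the remaining qubits of the post-measurement state), and since $\Theta_i$ is uniform and independent of that state, the single-qubit bound yields $H(X_i | X_{<i},\Theta) \geq \tfrac{1}{2}$. The standard chain rule then gives $H(X|\Theta) = \sum_i H(X_i|X_{<i},\Theta) \geq n/2$. By purifying $\regE$ into a larger pure state we may assume the setting is pure, which makes the reduced states well-defined without conceptual subtlety.

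To pass from Shannon entropy to smooth min-entropy I would analyze the random variable $Z \assign -\log P_{X|\Theta}(X|\Theta)$, whose expectation is exactly $H(X|\Theta) \geq n/2$. The increments $Z_i \assign -\log P_{X_i|X_{<i},\Theta}(X_i|X_{<i},\Theta)$ satisfy $\E[Z_i \mid X_{<i},\Theta] \geq \tfrac{1}{2}$, and crucially each $Z_i$ is bounded in a range of length roughly $2 - \log\lambda$ once one truncates outcomes whose conditional probability is smaller than $\lambda$ (the largest contribution comes from near-deterministic outcomes, which occur rarely). Applying Azuma--Hoeffding to the martingale $(Z_i - \E[Z_i\mid \cdot])$ with this bounded-range truncation yields a deviation bound of the shape $\Pr[Z < (\tfrac12 - 2\lambda)n] \leq \exp\bigl(-\Omega(\lambda^2 n / (2-\log\lambda)^2)\bigr)$, reproducing the constant $\sigma(\lambda) = \lambda^2 \log(e)/(32(2-\log\lambda)^2)$ up to the explicit numerical factor. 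Defining the event $\ev = \set{Z \geq (\tfrac12 - 2\lambda)n}$, we have $P[\ev] \geq 1 - 2^{-\sigma(\lambda)n}$ and, on $\ev$, $P_{X\Theta}(x,\theta) \leq 2^{-n}\cdot 2^{-(\frac12 - 2\lambda)n}$, so $\guess(X\ev|\Theta) \leq 2^{-(\frac12 - 2\lambda)n}$ and hence $\Hmin(X\ev|\Theta) \geq (\tfrac12 - 2\lambda)n$, which by the definition of smooth min-entropy yields the claimed bound.

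The hard part is the concentration step: the naive range of each $Z_i$ is unbounded (outcomes of negligible probability contribute enormously to $-\log P$), so Azuma cannot be applied directly. The fix is the truncation that accounts for the $\log\lambda$ factor in $\sigma(\lambda)$: one separates the event that some conditional probability drops below $\lambda$ (handled by a union bound contributing to the smoothing error $\eps$) from the bulk event where the increments are bounded by $O(2 - \log\lambda)$. Executing this truncation while maintaining the martingale structure under an arbitrary, possibly entangled, initial $\regE$ is the delicate bookkeeping; once this is in place, the rest is a direct Hoeffding computation.
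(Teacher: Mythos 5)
This theorem is not proved in the paper at all --- it is imported from~\cite{DFRSS07} (with only a footnote about the change in the smoothness notion) --- so there is no in-paper argument to compare yours against; I can only judge your proposal on its own terms. Your architecture is reasonable: Maassen--Uffink on a single qubit gives $H(X_i\,|\,\Theta_i)\geq\frac12$, the chain rule lifts this to $H(X|\Theta)\geq n/2$, and your final conversion of a high-probability event $\ev$ into a bound on $\guess(X\ev|\Theta)$, hence on $\Hmin[\eps](X|\Theta)$, is exactly right for the definition used here. The problem is the concentration step, and it is a genuine gap, not bookkeeping. Your truncation discards the event that some conditional probability $P_{X_i|X_{<i}\Theta}$ falls below $\lambda$ and charges it to the smoothing error by a union bound. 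But each qubit can realize its rare outcome with probability up to $\lambda$ \emph{independently}: for $n$ copies of $\sqrt{1-\lambda}\,\ket{0}+\sqrt{\lambda}\,\ket{1}$, roughly $n/2$ qubits are measured in the $+$ basis and each yields outcome $1$ with probability $\lambda$, so the ``bad'' event has probability $1-(1-\lambda)^{n/2}\rightarrow 1$. That is not a $2^{-\sigma(\lambda)n}$ contribution to $\eps$; it is essentially all of the probability mass. You have also misdiagnosed which tail is dangerous: a tiny conditional probability makes $Z_i=-\log P_{X_i|X_{<i}\Theta}$ \emph{large}, which pushes $Z$ upward and cannot cause the lower-tail event $Z<(\frac12-2\lambda)n$ you need to bound.

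The repair is to cap the increments rather than discard outcomes: set $\tilde Z_i=\min(Z_i,B)$ with $B\approx 2-\log\lambda$. Then $\sum_i\tilde Z_i\leq Z$, so a lower-tail bound for $\sum_i\tilde Z_i$ transfers to $Z$ with \emph{no} loss of probability, and the cap costs only $\E[(Z_i-B)^+\mid\cdot]\leq 2^{-B}\log(e)/e\approx 0.14\,\lambda$ in each conditional mean --- well inside your $2\lambda$ slack. Azuma--Hoeffding with increments in $[0,B]$ then gives $P\bigl[\textstyle\sum_i\tilde Z_i<(\frac12-2\lambda)n\bigr]\leq 2^{-2\log(e)\lambda^2 n/(2-\log\lambda)^2}$, comfortably stronger than the stated $\sigma(\lambda)$. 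One further correction is needed for the martingale to exist at all: $\E[Z_i\mid X_{<i},\Theta]\geq\frac12$ is false as written, since for a fixed $\Theta_i$ the conditional single-qubit state may be an eigenstate of that basis and the conditional entropy is $0$; Maassen--Uffink only controls the \emph{average} over $\Theta_i$. You must take $\mathcal{F}_{i-1}=\sigma(X_{<i},\Theta_{<i})$, note that $\Theta_i$ is still uniform given $\mathcal{F}_{i-1}$, and reveal $(\Theta_i,X_i)$ jointly at step $i$; then $\E[Z_i\mid\mathcal{F}_{i-1}]\geq\frac12$ holds and the argument goes through.
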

Thus, ignoring negligibly small ``error probabilities'' and linear fractions that can be chosen arbitrarily small, the outcome of measuring any $n$-qubit state in a random basis has $n/2$ bits of min-entropy, given the basis.

\myparagraph{Privacy Amplification. }
Finally, we recall the quantum-privacy-amplification theorem of Renner and K\"onig~\cite{RK05}. 
The version we use here follows immediately from~\cite[Corollary
5.6.1]{Renner05} by applying the chain rule for min- and
max-entropy~\cite[Lemma~3.2.9]{Renner05} and using the equivalence, as
shown in~\cite{KRS08arxiv}, of the quantum and the classical notion of
(smooth) conditional min-entropy.  Recall that a class $\cal F$ of
hash functions from $\cal X$ to $\cal Y$ is called (strongly)
universal-2 if for any $x \neq x' \in \cal X$, and for $F$ uniformly
distributed over $\cal F$, the collision probability $P[F(x) \!=\!
F(x')]$ is upper bounded by $1/|{\cal Y}|$, respectively, for the
strong notion, the random variables $F(x)$ and $F(x')$ are uniformly
and independently distributed over $\cal Y$.

\begin{theorem}\label{thm:pa}
Let $X$ and $Z$ be random variables distributed over $\cal X$ and $\cal Z$, respectively, and let $\regE$ be a $q$-qubit state that may depend on $X$ and $Z$. Let $F$ be the random and independent choice of a member of a universal-2 class of hash functions ${\cal F}$ from $\cal X$ into $\set{0,1}^{\ell}$. 
Then, for any $\varepsilon > 0$
$$
  \delta\big(\rho_{F(X) F Z \regE}, {\textstyle\frac{1}{2^{\ell}}\I } \otimes \rho_{F Z \regE}\big) \leq \frac{1}{2} \, 2^{-\frac{1}{2}\big(\Hmin[\eps](X|Z)-q-\ell\big)} + 2 \eps \, . \label{dbound}
$$
\end{theorem}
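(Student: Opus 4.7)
The plan is to derive the bound from the standard quantum privacy amplification theorem (Corollary 5.6.1 in \cite{Renner05}) by two successive manipulations: pulling the quantum register $\regE$ out of the conditioning via a chain rule, and then reinterpreting the remaining classical conditioning in the sense defined in this paper.

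First, I would invoke Renner's quantum PA result, which guarantees that for any hash function $F$ drawn uniformly from a universal-2 family and any quantum state on which we condition, here the joint state $Z\regE$ (treating $Z$ as a classical quantum register $\rho_Z = \sum_z P_Z(z)\proj{z}$), one has
$$
\delta\big(\rho_{F(X) F Z \regE},\, {\textstyle\frac{1}{2^{\ell}}\I}\otimes \rho_{F Z \regE}\big) \;\leq\; \tfrac{1}{2}\,2^{-\frac12(\Hmin[\eps](X|Z\regE)-\ell)} + 2\eps,
$$
where here $\Hmin[\eps](X|Z\regE)$ denotes the \emph{quantum} conditional smooth min-entropy of $X$ given the joint classical-quantum register $Z\regE$. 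The additive $2\eps$ term arises from smoothing the underlying state by a state within trace distance $\eps$.

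Second, I would apply the chain rule of \cite[Lemma 3.2.9]{Renner05} to strip $\regE$ out of the conditioning. Since $\regE$ is a $q$-qubit state, its max-entropy is bounded by $q$, yielding
$$
\Hmin[\eps](X|Z\regE) \;\geq\; \Hmin[\eps](X|Z) - q,
$$
where the right-hand side now involves only classical-quantum conditioning on $Z$ alone. Substituting this inequality into the bound above and rearranging produces the stated exponent $\Hmin[\eps](X|Z)-q-\ell$.

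Finally, I would justify that the $\Hmin[\eps](X|Z)$ appearing after the chain rule coincides with the purely classical smooth min-entropy as defined in this paper. This is precisely the content of \cite{KRS08arxiv}: for classical $X$ and classical $Z$, the quantum definition of conditional smooth min-entropy (via nearby cq-states) agrees with the classical definition (via guessing probability and event-smoothing). Combining the three steps gives the claimed inequality. The only subtlety, and the place that requires care rather than routine calculation, is checking that the $\eps$-smoothing parameter is preserved consistently across the three formulations (quantum smooth, chain-rule reduction, and classical smooth), so that the same $\eps$ appears in the exponent and in the additive $2\eps$ term of the final bound.
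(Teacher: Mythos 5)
Your proposal is correct and follows essentially the same route the paper itself indicates: invoking Renner's privacy-amplification result (Corollary 5.6.1 of \cite{Renner05}), stripping the $q$-qubit register $\regE$ from the conditioning via the chain rule (Lemma 3.2.9 of \cite{Renner05}) at a cost of $q$ bits, and identifying the resulting quantity with the classical guessing-probability definition of $\Hmin[\eps](X|Z)$ via the equivalence of \cite{KRS08arxiv}. The paper gives no further detail beyond citing these three ingredients, so your expansion, including the caveat about tracking the smoothing parameter $\eps$ consistently across the three steps, matches its argument.
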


\section{The Identification Scheme}

\subsection{The Setting}

We assume that the honest user $\U$ and the honest server $\S$ share
some key $w \in \W$  (which we think of as a password), where the
choice of $w$ is described by the random variable $W$.
An identification protocol is now simply any protocol for $\U$ and $\S$ 
using classical and/or quantum communication where the parties are both given
as input a security parameter $n$ and (in the honest case) the password $w$, and 
where $\S$ outputs accept or reject in the end. 

We do not require $\W$ to be very large (i.e.  $|\W|$ does not have to
be lower bounded by the security parameter in any way), and $w $ does
not necessarily have to be uniformly distributed in $\W$. So, we may
think of $w $ as a human-memorizable password or PIN code. The goal of
this section is to construct an identification scheme that allows $\U$
to ``prove'' to $\S$ that he knows $w$. The scheme should have the
following security properties: a dishonest server $\dS$ learns
essentially no information on $w$ beyond that he can come up with a
guess $w'$ for $w$ and learns whether $w' = w$ or not, and similarly a
dishonest user succeeds in convincing the verifier essentially only if
he guesses $w$ correctly, and if his guess is incorrect then the only
thing he learns is that his guess is incorrect. This in particular
implies that as long as the entropy of $W$ is large enough, the
identification scheme may be safely repeated.  Finally, it must of
course be the case that $\S$ accepts the legitimate user who has the
correct password.  More formally, we require the following:

\begin{definition} \label{def:correctness}
An execution by honest
$\U,\S$ on input $w$ for both parties results in $\S$ accepting, except with negligible probability
(as a function of $n$).
\end{definition}

\begin{definition} \label{def:usersecurity}
We say that an identification protocol for two parties $\U, \S$ is secure for the user with
error $\eps$ against (dishonest) server $\dS$ if the following is
satisfied: whenever the initial state of $\dS$ is independent of $W$,
the joint state $\rho_{W \regS}$ after the execution of the protocol is such that there exists a random variable $W'$ that is independent of $W$ and such that 
$$
\rho_{W W' \regS|W'\neq W} \close{\eps} \rho_{W\leftrightarrow W'\leftrightarrow  \regS|W'\neq W}.
$$
\end{definition}

\begin{definition} \label{def:serversecurity} We say that an
  identification protocol for two parties $\U, \S$ is secure for the
  server with error $\eps$ against (dishonest) user $\dU$ if the
  following is satisfied: whenever the initial state of a dishonest
  user $\dU$ is independent of $W$, there exists $W'$ (possibly
  $\perp$), independent of $W$, such that if $W \neq W'$ then $\S$
  accepts with probability at most $\eps$, and if $W = W'$ then $\S$ accepts
  with certainty. Furthermore, the common state $\rho_{W \regU}$ after
  the execution of the protocol (including $\S$'s announcement to
  accept or reject) satisfies
$$
\rho_{W W' \regU|W' \neq W} \close{\eps} \rho_{W\leftrightarrow W'\leftrightarrow \regU| W'\neq W} \enspace .
$$ 

\end{definition}

If these definitions are satisfied for a small $\eps$, we  are guaranteed 
that whatever a dishonest party does is
essentially as good as trying to guess $W$ by some arbitrary (but
independent) $W'$ and learning whether the guess was correct or not,
but nothing beyond that. Such a property is obviously the best one can hope for, since an attacker may always honestly execute the protocol with a guess for $W$ and observe whether the protocol was successful. 

We would like to point out that the above security definitions, and in
fact any security claim in this paper, guarantees {\em sequential
  self-composability}, as the output state is guaranteed to have the
same independency property (for any fixed choice of $W'$) as is
required from the input state (except if the attacker guesses
$W$). Moreover, it is shown in~\cite{FS08arxiv,FS09} that our
definitions imply a ``real/ideal'' world definition given
in~\cite{FS09}. More specifically, it is shown that a protocol
satisfying our information theoretic conditions implements a natural
ideal identification functionality, and by the composition theorem from~\cite{FS09},
this means that the protocol composes sequentially in a classical
environment, i.e.~the quantum protocol can be treated as the ideal
functionality when analyzing a more complicated classical
outer protocol.

It should be noted that security for user and server is usually not sufficient for application in practice of an identification protocol. A problem occurs if the honest user and server are interacting and an attacker can manipulate the communication, i.e., do a ``man-in-the-middle'' attack, and observe the reaction of the honest parties. This scenario is not covered by the above definitions, and indeed it turns out that the simplest version of our protocol is not secure against such an attack. Nevertheless, the problem can be solved and we address it  in Section \ref{mim}.

\subsection{The Intuition}

The scheme we propose is related to the (randomized) 1-2 OT scheme
of~\cite{DFRSS07}. In that scheme, Alice sends $\ket{x}_\theta$ to
Bob, for random $x \in \set{0,1}^n$ and $\theta \in \set{+,\times}^n$.
Bob then measures everything in basis $+$ or $\times$, depending on
his choice bit $c$, so that he essentially knows half of $x$ (where
Alice used the same basis as Bob) and has no information on the other
half (where Alice used the other basis), though, at this point, 
he does not know yet which bits he knows and which ones he does not. 
Then, Alice sends $\theta$
and two hash functions to Bob, and outputs the hash values $s_0$ and
$s_1$ of the two parts of $x$, whereas Bob outputs the hash value
$s_c$ that he is able to compute from the part of $x$ he knows. It is
proven in~\cite{DFRSS07} that no dishonest Alice can learn $c$, and for
any quantum-memory-bounded dishonest Bob, at least one of the two
strings $s_0$ and $s_1$ is random for Bob.

This scheme can be extended by giving Bob more options for measuring
the quantum state. Instead of measuring all qubits in the $+$ or the
$\times$ basis, he may measure using $m$ different strings of bases, where any
two possible basis-strings have large Hamming distance. Then Alice
computes and outputs $m$ hash values, one for each possible basis-string that
Bob might have used. She reveals $\theta$ and the hash functions
to Bob, so he can compute the hash value
corresponding to the basis that he has used, and no other hash
value. Intuitively, such an extended scheme leads to a randomized 1-$m$ OT. 

The scheme can now be transformed into a secure identification scheme as
follows, where we assume (wlog) that $\W = \set{1,\ldots,m}$. 
The user $\U$, acting as Alice, and the server $\S$, acting as
Bob, execute the randomized 1-$m$ OT scheme where $\S$
``asks'' for the string indexed by his key $w$, such that $\U$ obtains 
random strings $s_1,\ldots,s_m$ and $\S$ obtains~$s_w$. Then, to
do the actual identification, $\U$ sends
$s_w$ to $\S$, who accepts if and only if it coincides with his
string $s_w$. Intuitively, such a construction is secure against a dishonest server
since unless he asks for the right string (by guessing $w $ correctly)
the string $\U$ sends him is random and thus gives no information on
$w $. On the other hand, a dishonest user does not know which of the
$m$ strings $\S$ asked for and wants to see from him.
We realize this intuitive idea in the next section. In the actual protocol,
$\U$ does not have to explicitly compute all the $s_i$'s,
and also we only need a single hash function (to compute $s_w$).
We also take care of some subtleties, for instance that the $s_i$ are not necessarily
random if Alice (i.e. the user) is dishonest.

\subsection{The Basic Scheme}

Let $\code:\W\rightarrow\set{+,\times}^n$ be the encoding function of a binary code of length $n$ 
with $m = |\W|$ codewords and minimal distance $d$. $\code$ can be chosen such that $n$ is linear in $\log(m)$ or larger, and $d$ is linear in $n$.
Furthermore, let ${\cal F}$ and ${\cal G}$ be strongly universal-2 classes of hash functions\footnote{Actually, we only need $\cal G$ to be {\em strongly} universal-2. }
from $\set{0,1}^n$ to $\set{0,1}^{\ell}$ and from $\W$ to $\set{0,1}^{\ell}$, respectively, for some parameter~$\ell$. For $x \in \set{0,1}^n$ and $I \subseteq \set{1,\ldots,n}$, we define $x|_I \in \set{0,1}^n$ to be the restriction of $x$ to the coordinates $x_i$ with $i \in I$. If $|I| < n$ then applying $f \in {\cal F}$ to $x|_I$ is to be understood as applying $f$ to $x|_I$ padded with sufficiently many $0$'s.

\begin{myfigure}{H}
\begin{myprotocol}{\QID}
\item\label{it:prepare} $\U$ picks $x \in_R \set{0,1}^n$ and $\theta \in_R \{+,\times \}^n$, and sends state $\ket{x}_\theta$ to $\S$. 
\item\label{it:measure} $\S$ measures $\ket{x}_\theta$ in basis $c = \code(w)$. Let $x'$ be the outcome. 
 \item\label{it:bound_applies} 
$\U$ picks $f \in_R {\cal F}$ and sends $\theta$ and $f$ to $\S$. Both compute $I_w \assign  \Set{i}{\theta_i \!=\! \code(w)_i}$. 
 \item $\S$ picks $g \in_R {\cal G}$ and sends $g$ to $\U$. 
 \item $\U$ computes and sends $z \assign f(x|_{I_w}) \oplus g(w)$ to $\S$. 
 \item $\S$ accepts if and only if $z = z'$ where $z' \assign f(x'|_{I_w}) \oplus g(w)$. 
\end{myprotocol}
\end{myfigure}

It is trivial that the protocol satisfies  Definition~\ref{def:correctness}. In addition, we have:

\begin{proposition}[User security]\label{prop:Usec} 
  Assume that the size of the quantum memory of dishonest server $\dS$
  is at most $q$ at step~\ref{it:bound_applies} of \QID, and that
  $\Hmin(W) \geq 1$. Then \QID is secure for the user with error
  $\eps$ against $\dS$ according to Definition~\ref{def:usersecurity},
  where
$$\eps = 2^{-\frac12((\frac14 - \lambda) d - \log(m) - q - \ell - 1)} + 2^{-(\sigma(\lambda) d - \log(m) - 3)}$$ 
for an arbitrary $0 < \lambda < \frac14$. 
\end{proposition}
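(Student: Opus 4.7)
The plan is to mimic the DFSS-style randomised 1-out-of-2 OT security argument from~\cite{DFRSS07}, generalised from $m\!=\!2$ to $m\!=\!|\W|$ choices via the Entropy-Splitting Lemma. User security reduces to showing that the one-time pad $z = f(X|_{I_W}) \oplus g(W)$ reveals essentially nothing about $W$ to $\dS$, once $\dS$ is implicitly committed to a (wrong) guess $W'$ by her measurement strategy. I first recast the quantum phase in the EPR picture: $\U$ sends one half of $n$ EPR pairs and postpones her basis-$\Theta$ measurement until immediately after step~\ref{it:bound_applies}. This leaves the joint distribution of $(X,\Theta,\dS\text{'s view})$ unchanged but lets me apply the uncertainty relation at the precise moment when $\dS$'s quantum register $E'$ has at most $q$ qubits.

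For the core quantum step, I fix any pair $w_1 \neq w_2 \in \W$ and let $D_{w_1,w_2} \assign \{i : \code(w_1)_i \neq \code(w_2)_i\}$, which has size at least $d$. Each $i \in D_{w_1,w_2}$ belongs to exactly one of $I_{w_1}, I_{w_2}$, so $X|_{D_{w_1,w_2}}$ is a deterministic function of the pair $(X|_{I_{w_1}}, X|_{I_{w_2}})$. Applying Theorem~\ref{thm:uncertainty} to the $|D_{w_1,w_2}|$-qubit marginal of $\U$'s EPR halves on those positions, measured in the uniformly random sub-basis $\Theta|_{D_{w_1,w_2}}$, yields $\Hmin[\eps_1](X|_{I_{w_1}} X|_{I_{w_2}} \,|\, \Theta) \geq (\tfrac12 - 2\lambda)d$ with $\eps_1 = 2^{-\sigma(\lambda)d}$; the remaining bits of $\Theta$ are independent of $X|_{D_{w_1,w_2}}$ and drop out of the conditioning. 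Feeding these pairwise bounds into Lemma~\ref{lemma:ES}, applied to the family $\{X|_{I_w}\}_{w\in\W}$ with $Z = \Theta$, produces a random variable $W'$ that depends only on $(X,\Theta)$---hence is independent of $W$---and satisfies
$$
\Hmin[2m\eps_1](X|_{I_W} \,|\, W,W',\Theta,\, W \!\neq\! W') \;\geq\; (\tfrac14 - \lambda) d - \log m - 1.
$$
This $W'$ is exactly the guess required by Definition~\ref{def:usersecurity}.

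To finish, I invoke Theorem~\ref{thm:pa} with hashed string $X|_{I_W}$, classical side-information $(W,W',\Theta)$, quantum side-information $E'$ of at most $q$ qubits, and a uniformly random hash $f \in_R {\cal F}$, all conditioned on the event $W \neq W'$. This gives $f(X|_{I_W})$ $\delta$-close to uniform and independent of $(f,W,W',\Theta,E')$ for a $\delta$ matching the first summand of the claimed $\eps$. Since $\dS$'s remaining actions---producing $g$ from her view and later absorbing $z$---are quantum operations on $E'$ with classical inputs, they preserve trace distance; hence $z = f(X|_{I_W}) \oplus g(W)$ is $\delta$-close to uniform given $\dS$'s full final view. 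Consequently $W$ is close to independent of $\dS$'s view given $W'$, which is precisely the Markov condition $\rho_{WW'\regS|W\neq W'} \close{\eps} \rho_{W\leftrightarrow W'\leftrightarrow\regS|W\neq W'}$. Adding the two error contributions yields the claimed~$\eps$.

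The step I expect to require the most care is the interface between the \emph{classical} Entropy-Splitting Lemma---which needs a classical conditioning variable $Z$---and the quantum side-information $E'$. The clean solution is to run the splitting with $Z = \Theta$ only, obtain a classical min-entropy bound on $X|_{I_W}$ given $(W,W',\Theta)$, and pay the $q$-qubit cost of $E'$ once, inside the quantum-PA bound. Two secondary subtleties need to be checked: that the uncertainty relation can be legitimately restricted to the $|D_{w_1,w_2}|$-qubit sub-state (it can, because the marginal on those positions is a fixed mixed state and $\Theta|_{D_{w_1,w_2}}$ is uniform and independent of the rest of $\Theta$), and that $W'$ is genuinely independent of $W$ (it is, because $\dS$'s initial state is independent of $W$ by hypothesis and the quantum phase never uses $W$, so $(X,\Theta)$ and therefore any function of it is independent of $W$).
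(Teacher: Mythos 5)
Your proposal is correct and follows essentially the same route as the paper's proof: EPR purification, the uncertainty relation applied to the $\geq d$ positions where two codewords differ, the Entropy-Splitting Lemma with $Z=\Theta$ to obtain $W'$, and privacy amplification against the $q$-qubit register to make $f(X|_{I_W})$ (and hence $z$) uniform, yielding the conditional-independence statement. The only detail you gloss over is the final bookkeeping: the paper passes through the intermediate state $\frac{1}{2^\ell}\I\otimes\rho_{WW'(F,G,\Theta,\regS')|W'\neq W}$ and applies the triangle inequality twice, which is where the overall factor $\eps=2\eps''$ comes from.
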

Note that $\sigma(\lambda)$ was defined earlier in the claim of the uncertainty relation. To understand what the result on $\eps$ means, note that using a family of asymptotically good codes, we can assume that $d$ grows linearly with the main security parameter $n$, while still allowing $m$ (the number of passwords) to be exponential in $n$.  So we may choose the parameters such that 
$\frac{d}{n}, \frac{\log(m)}{n}, \frac{q}{n}$ and  $\frac{\ell}{n}$ are all constants. The result above now says that $\eps$ is exponentially small as a function of $n$ if these constants are chosen in such a way that for some $0< \lambda < \frac{1}{4}$, it holds that
$(\frac14-\lambda)\frac{d}{n} - \frac{ \log(m)}{n} - \frac{q}{n} - \frac{\ell}{n} > 0$ and
 $\sigma(\lambda)\frac{d}{n} -  \frac{\log(m)}{n}> 0$.
See Theorem \ref{thm:impersonation} for a choice of parameters that also take server security into account. 
If we are willing to assume that $\log(m)$ is sublinear in $n$, which may be quite reasonable is case we use short passwords that humans can remember, the condition further simplifies to
$\frac{d}{4n} - \frac{q}{n} - \frac{\ell}{n} > 0$.

\begin{proof}

We consider and analyze a {\em purified} version of \QID where in
step~\ref{it:prepare}, instead of sending $\ket{x}_\theta$ to $\dS$ for a
random $x$, $\U$ prepares a fully entangled state $2^{-n/2}\sum_x
\ket{x}\ket{x}$ and sends the second register to $\dS$ while
keeping the first. Then, in step~\ref{it:bound_applies} when the memory
bound has applied, he measures his register in the random basis
$\theta \in_R \set{+,\times}^n$ in order to
obtain $x$. Standard arguments imply that this purified version
produces exactly the same common state, consisting of the classical
information on $\U$'s side and $\dS$'s quantum state.

Recall that before step~\ref{it:bound_applies} is executed, the memory
bound applies to $\dS$, meaning that $\dS$ has to measure all but
$q$ of the qubits he holds, which consists of his initial state and
his part of the EPR pairs. Before doing the measurement, he may append
an ancilla register and apply an arbitrary unitary transform. As a
result of $\dS$'s measurement, $\dS$ gets some outcome $y$, and the
common state collapses to a $(n+q)$-qubit state (which depends on
$y$), where the first $n$ qubits are with $\U$ and the remaining $q$
with $\dS$. The following analysis is for a fixed $y$, and works no
matter what $y$ is.

We use upper case letters $W$, $X$, $\Theta$, $F$, $G$ and $Z$ for the
random variables that describe the respective values $w$, $x$,
$\theta$ etc. in an execution of the purified version of \QID. We
write $X_j = X|_{I_j}$ for any $j$, and we let $\regS'$ be $\dS$'s
$q$-qubit state at step~\ref{it:bound_applies}, after the
memory bound has applied. Note that $W$ is independent of $X$,
$\Theta$, $F$, $G$ and $\regS'$.

For $1 \leq i \neq j \leq m$, 
fix the value of $X$, and correspondingly of $X_i$ and $X_j$, at the positions where $\code(i)$ and 
$\code(j)$ coincide, and focus
on the remaining (at least) $d$ positions. The uncertainty
relation (Theorem~\ref{thm:uncertainty}) implies that the restriction of $X$ to
these positions has $(\frac12 - 2\lambda)d$ bits of $\eps'$-smooth min-entropy
given~$\Theta$, where $\eps' \leq 2^{-\sigma(\lambda)d}$ and $0< \lambda < \frac12$ arbitrary. Since every bit in the restricted $X$ appears in one of $X_i$ and $X_j$, the pair $X_i,X_j$ also has $(\frac12 - 2\lambda)d$ bits of $\eps'$-smooth min-entropy
given~$\Theta$. 
The Entropy Splitting Lemma~\ref{lemma:ES} implies that there exists $W'$ (called $V$ in Lemma~\ref{lemma:ES}) such that if $W \neq W'$ then $X_W$ has $(\frac14 - \lambda)d - \log(m) - 1$ bits of $2m\eps'$-smooth min-entropy given $W$ and $W'$ (and~$\Theta$). Privacy amplification then guarantees that $F(X_W)$ is $\eps''$-close to random and independent of $F, W, W',\Theta$ and $\regS'$, conditioned on $W \neq W'$, where \smash{$\eps'' = \frac12\cdot 2^{-\frac12(d/4 - \lambda d - \log(m) - 1 - q - \ell)} + 4m\eps'$}. 
It follows that $Z = F(X_W) \oplus G(W)$ is $\eps''$-close to random and independent of $F, G, W, W',\Theta$ and $\regS'$, conditioned on $W \neq W'$. 

Formally, we want to upper bound
$\delta(\rho_{W W' \regS| W'\neq W},\  \rho_{W\leftrightarrow W'\leftrightarrow  \regS| W'\neq W})$. 
Since the output state $\regS$ is, without loss of generality, 
obtained by applying some unitary transform to
the set of registers
$(Z,F,G,W',\Theta, \regS')$, the distance above is equal to 
the distance between 
$\rho_{W W' (Z,F,G,\Theta, \regS')| W'\neq W}$ and $\rho_{W\leftrightarrow W'\leftrightarrow  (Z,F,G,\Theta, \regS')| W'\neq W}$. 
We then get:
\begin{align*}
\rho&_{W W' (Z, F, G,  \Theta, \regS')|W'\neq W} \close{\eps''} 
{\textstyle\frac{1}{2^\ell}}\I \otimes \rho_{W W' (F, G,  \Theta, \regS')|W'\neq W} \\
&=\; {\textstyle\frac{1}{2^\ell}}\I \otimes \rho_{W\leftrightarrow W'\leftrightarrow (F,G,\Theta, \regS')| W'\neq W} 
\close{\eps''}  \rho_{W\leftrightarrow W'\leftrightarrow  (Z,F,G,\Theta, \regS')| W'\neq W} \enspace ,
\end{align*}
where approximations follow from privacy amplification and the exact equality
comes from the independency of $W$, which, when conditioned 
on  $W' \neq W$, translates to independency given $W'$. 
The claim follows with $\eps = 2\eps''$. 
\qed
\end{proof}

\begin{proposition}[Server security]\label{prop:Ssec}
  If $\Hmin(W) \geq 1$, then \QID is secure for the server with error
  $\eps$ against any $\dU$ according to
  Definition~\ref{def:serversecurity}, where $\eps= m^2/2^\ell$.
\end{proposition}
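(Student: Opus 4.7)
The aim is to extract from $\dU$ a classical random variable $W'\in\W\cup\set{\perp}$, independent of $W$, such that $\S$ accepts with certainty when $W=W'$ and with probability at most $m^2/2^\ell$ when $W\neq W'$. The argument relies only on the strong $2$-universality of $\mathcal G$; neither the uncertainty relation, the memory bound, nor the minimum distance of $\code$ enters. By purification and convexity I may assume that $\dU$'s strategy is deterministic and pure: $\dU$ prepares a fixed pure bipartite state, sends one half to $\S$, announces fixed $\theta,f$, and, on receiving the server-chosen hash $g$, outputs $z=z(g)$ for a fixed function $z:\mathcal G\to\set{0,1}^\ell$.

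To define $W'$, I enlarge the probability space with hypothetical measurement outcomes $x^{(w)}\in\set{0,1}^n$, one for each $w\in\W$, representing the result of measuring $\S$'s $n$-qubit register in basis $\code(w)$; the joint distribution is set up via a purification argument so that the actual measurement outcome satisfies $x'=x^{(W)}$ while $\dU$'s reduced state (and hence $z=z(g)$) is unaffected by any of these virtual measurements. Letting $A_w\assign\set{z=f(x^{(w)}|_{I_w})\oplus g(w)}$, I set $W'\assign w^\ast$ if $A_{w^\ast}$ holds for a \emph{unique} $w^\ast\in\W$, and $W'\assign\perp$ otherwise. Because the $x^{(w)}$'s and the adversary's data $z,f,\theta$ are chosen without reference to $W$, the variable $W'$ is independent of $W$. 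Moreover, when $W=W'=w^\ast$, the coupling gives $x'=x^{(w^\ast)}$, so $A_{w^\ast}$ is exactly the server's acceptance equation and $\S$ accepts with certainty.

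For the main bound, the event ``$\S$ accepts and $W\neq W'$'' forces $A_W$ to hold together with some further $A_{w'}$ with $w'\neq W$ (otherwise $A_W$ would be the unique hit and $W'$ would equal $W$), so the event is contained in $\bigcup_{w\neq w'}(A_w\cap A_{w'})$. For a single pair $w\neq w'$,
$$
A_w\cap A_{w'}\;\subseteq\;\bigl\{\,g(w)\oplus g(w')\;=\;f(x^{(w)}|_{I_w})\oplus f(x^{(w')}|_{I_{w'}})\,\bigr\},
$$
whose right-hand side is independent of $g$. Strong $2$-universality of $\mathcal G$ makes $g(w)\oplus g(w')$ uniformly distributed on $\set{0,1}^\ell$ and independent of everything that does not depend on $g$, so each such pair has probability at most $2^{-\ell}$. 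A union bound over the $\binom{m}{2}\leq m^2/2$ pairs yields the claimed bound $m^2/2^\ell$.

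The main obstacle is the coupling step, namely realising the family $(x^{(w)})_{w\in\W}$ on a common probability space with the real execution in such a way that $x'=x^{(W)}$ holds and yet $\dU$'s marginal state is undisturbed. This is handled by a purification/Stinespring argument in which $\S$'s physical measurement is deferred and the virtual basis-$\code(w)$ measurements are performed on ancillary copies of the relevant registers before being traced out on the $\dU$-side; once this setup is in place, the remainder of the proof reduces to the classical two-universality calculation above.
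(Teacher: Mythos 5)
Your core combinatorial idea---that an acceptance with $W\neq W'$ forces a collision $g(w)\oplus g(w')=f(x^{(w)}|_{I_w})\oplus f(x^{(w')}|_{I_{w'}})$ for some pair $w\neq w'$, which by strong $2$-universality of $\mathcal G$ and a union bound over $m^2/2$ pairs happens with probability at most $m^2/2^{\ell+1}$---is exactly the paper's argument (there phrased as: the $S_j\assign F(X|_{I_j})\oplus G(j)$ are pairwise independent, hence all distinct except with probability $m^2/2^{\ell+1}$). But the step you flag as ``the main obstacle'' is not handled by what you describe. You cannot realise the family $(x^{(w)})_{w}$ by performing ``virtual basis-$\code(w)$ measurements on ancillary copies of the relevant registers'': the full-register measurements in the bases $\code(w)$ are mutually incompatible and the registers cannot be copied. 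The coupling exists for a specific structural reason: only $x^{(w)}|_{I_w}$ enters $A_w$, and on $I_w$ one has $\code(w)_i=\theta_i$ by definition of $I_w$. Hence a \emph{single} measurement of the whole register in basis $\theta$ (deferred until after step 3, which changes nothing the adversary sees since only the qubits outside $I_w$ are measured differently) simultaneously produces all the $x^{(w)}|_{I_w}$ and agrees with the real server's outcome on $I_W$. This observation is the content of the paper's ``modified unbounded server'' and is what makes the proposition true; without it your $W'$ is not defined on any legitimate probability space.

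Two further gaps. First, Definition~\ref{def:serversecurity} has a second requirement: after $\S$ \emph{announces} accept/reject, the state must satisfy $\rho_{WW'\regU|W'\neq W}\epsclose\rho_{W\leftrightarrow W'\leftrightarrow\regU|W'\neq W}$. Your proposal only bounds the acceptance probability. The announcement can leak information about $W$: if two of the $S_j$ collide with $z$ (so $W'=\perp$), the accept/reject bit tells the adversary whether $W$ lies in that colliding set. The paper handles this by conditioning on the event $\ev$ that all $S_j$ are distinct (on which the decision is a deterministic function of $W'$ alone when $W'\neq W$), checking that $\ev$ is independent of $W,W'$, and invoking the decomposition of Lemma~\ref{lemma:decomp} to bound the trace distance by $P[\bar\ev\,|\,W'\neq W]$. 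You need some version of this. Second, you never use the hypothesis $\Hmin(W)\geq 1$; it is needed to pass from $P[\bar\ev]\leq m^2/2^{\ell+1}$ to the \emph{conditional} bound $P[\bar\ev\,|\,W'\neq W]\leq m^2/2^{\ell}$ (via $P[W'\neq W]\geq\tfrac12$, using independence of $W$ and $W'$), which is where the stated $\eps=m^2/2^\ell$ rather than $m^2/2^{\ell+1}$ comes from.
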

The formal proof is given
below. 
The idea is the following. We let $\dU$ execute \QID with a server that is {\em unbounded} in quantum memory. Such a server can obviously obtain $x$ and thus compute $s_j = f(x|_{I_j}) \oplus g(j)$ for all $j$. Note that $s_w$ is the message $z$ that $\dU$ is required to send in the last step. Now, if the $s_j$'s are all distinct, then $z$ uniquely defines $w'$ such that $z = s_{w'}$, and thus $S$ accepts if and only if $w' = w$, and $\dU$ does not learn anything beyond. The strong universal-2 property of $g$ guarantees that the $s_j$'s are all distinct except with probability $m^2/2^{\ell}$. 

\begin{proof}
Again, we consider a slightly modified version. We let $\dU$ interact
with a server that has {\em unbounded} quantum memory and does the
following. Instead of measuring $\ket{x}_\theta$ in step~\ref{it:measure} in basis $c$, it stores the state and measures it after step~\ref{it:bound_applies} in basis $\theta$ (and obtains $x$). 
This modified version produces the same common state $\rho_{W \regU}$ as the original scheme, since the only difference between the two is when and in what basis the qubits at positions $i \not\in I_w$ are measured, which does not effect the execution in any way. 

We use the upper case letters $W$, $X$, $\Theta$, $F$, $G$ and $Z$ for the random variables that describe the respective values $w$, $x$, $\theta$ etc. in an execution of the modified version of \QID. 
Furthermore, we define $S_j \assign F(X|_{I_j}) \oplus G(j)$ for $j=1,\ldots,m$. Note that $Z' = S_W$ represents the value $z'$ used by $\S$ in the last step. 
Let $\ev$ be the event that all $S_j$'s are distinct. 
By the strong universal-2 property, and since $G$ is independent of $X$ and $F$, the $S_j$'s are pairwise independent and thus it follows from the union bound that $\ev$ occurs except with probability at most $m(m-1)/2 \cdot 1/2^{\ell} \leq m^2/2^{\ell+1}$. 

Let $\regU'$ be $\dU$'s quantum state after the execution of \QID but
{\em before} he learns $\S$'s decision to accept or reject. We may
assume that the values of all random variables $X$, $\Theta$, $F$,
$G$, $Z$ and the $S_j$'s are known/given to $\dU$, i.e., we consider
them as part of $\regU'$.  Furthermore, we may assume that $Z$ is one
of the $S_j$'s, i.e. that $Z = S_{W'}$ for a random variable
$W'$. Indeed, if $Z \neq S_j$ for all $j$ then we set $W' \assign
\perp$ and $\S$'s decision is ``reject'', no matter what $W$ is, and
$\dU$ obviously learns no information on $W$ at all. By the way we
have defined $W'$, is clear that $\S$ accepts if $W=W'$.

Note that $\regU'$ is independent of $W$ by assumption on $\dU$'s
initial state (in Definition~\ref{def:serversecurity}) and by
definition of the random variables $X$, $\Theta$ etc. Since $\ev$ is
determined by
the $S_j$'s (which are part of $\regU'$), this holds also when
conditioning on $\ev$. This then translates to the independence of
$\regU'$
from $W$ when given $W'$, conditioned on $W' \neq W$ and~$\ev$.

We now consider $\dU$'s state $\regU$ {\em after} he has learned
$\S$'s decision. If $W' \neq W$ and all $S_j$'s are distinct then $\S$
rejects with probability $1$. Hence, conditioned on the events $W'
\neq W$ and $\ev$, $\dU$'s state $\regU$
remains independent of $W$ given $W'$.
Define $p:= P[\ev| W'\!\neq\! W]$ and $\bar{p}:= P[\bar{\ev}| W'\!\neq\! W]$ $= 1-p$, where $\bar{\ev}$ is the complementary event to $\ev$.
Recall that $P[\bar{\ev}] \leq m^2/2^{\ell+1}$, and therefore  $\bar{p} \leq P[\bar{\ev}]/(1-P[W'\!=\!W]) \leq 2 P[\bar{\ev}] \leq m^2/2^{\ell}$, where the second-last inequality follows from the independence of $W$ and $W'$, and from the condition on $\Hmin(W)$. 
Note that $\bar{p}$ upper bounds the probability that $\S$ accepts in case $W' \neq W$, proving the first claim. 
From the above
it follows that
\begin{align*}
\rho_{W W' \regU|W' \neq W} 
&= p\cdot\rho_{W W' \regU|\ev,W' \neq W} + \bar{p}\cdot\rho_{W W' \regU|\bar{\ev},W' \neq W} \\
&= p\cdot\rho_{W\leftrightarrow W' \leftrightarrow \regU|\ev,W' \neq W} + \bar{p}\cdot\rho_{W W' \regU|\bar{\ev},W' \neq W} \, .
\end{align*}
Furthermore, it is not too hard to see that $\ev$ is independent of $W$ and $W'$, and thus also when conditioned on $W' \neq W$. Lemma~\ref{lemma:decomp} hence implies that 
\begin{align*}
\rho_{W\leftrightarrow W' \leftrightarrow \regU|W' \neq W} 
&= p\cdot\rho_{W\leftrightarrow W' \leftrightarrow \regU|\ev,W' \neq W} + \bar{p}\cdot\rho_{W\leftrightarrow W' \leftrightarrow \regU|\bar{\ev},W' \neq W} \,. 
\end{align*}
By definition of the metric $\delta(\cdot,\cdot)$, and because it cannot be bigger than 1, the distance between the two states is at most $\bar{p} \leq m^2/2^{\ell}$. 
\qed
\end{proof}

We call an identification scheme {\em $\eps$-secure against impersonation attacks} if the protocol is secure for the user and secure for the sender with error at most $\eps$ in both cases. The following holds:

\begin{theorem}
\label{thm:impersonation}
If $\Hmin(W) \geq 1$, then the identification scheme \QID (with suitable choice of parameters) is $\eps$-secure against impersonation attacks for any unbounded user and for any server with quantum memory bound $q$, where 
$$
\eps = 2^{-\frac13((\frac14 - \lambda) n\mu - 3\log(m) - q - 2)} + 
2^{-(\sigma(\lambda) n\mu - \log(m) - 4)}
$$ 
for an arbitrary $0 < \lambda < \frac14$, and where $\mu =
h^{-1}(1-\log(m)/n)$, and $h^{-1}$ is the inverse function of the
binary entropy function: $h(p) \assign
-p\cdot\log(p)-(1-p)\cdot\log(1-p)$ restricted to $0 < p \leq
\frac12$.  In particular, if $\log(m)$ is sublinear in $n$, then
$\eps$ is negligible in $n - 8q$.
\end{theorem}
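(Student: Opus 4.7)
The proof is essentially a combination of Proposition~\ref{prop:Usec} and Proposition~\ref{prop:Ssec} with a suitable choice of the two parameters that are still free in \QID, namely the code $\code$ (through its length $n$ and minimal distance $d$) and the output length $\ell$ of the universal-2 hash functions in ${\cal F}$ and ${\cal G}$.

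First, I would instantiate $\code:\W\to\{+,\times\}^n$ using the Gilbert--Varshamov bound (identifying the alphabet $\{+,\times\}$ with $\{0,1\}$): for given $m=|\W|$ and $n$, there exists a binary code with $m$ codewords of length $n$ whose minimal distance satisfies $d\geq n\mu$, with $\mu=h^{-1}(1-\log(m)/n)$. Substituting this lower bound into both propositions simply replaces $d$ by $n\mu$ everywhere in the exponents of the error expressions.

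Second, $\ell$ is chosen to balance the two error contributions that depend on it. The first summand of the user-security error,
\[
T_1(\ell):=2^{-\frac12((\frac14-\lambda)n\mu-\log(m)-q-\ell-1)},
\]
is increasing in $\ell$, whereas the server-security error $\eps_S=m^2/2^\ell=2^{2\log(m)-\ell}$ is decreasing in $\ell$. Setting the two expressions equal gives $\ell\approx\frac13\bigl((\tfrac14-\lambda)n\mu+3\log(m)-q-1\bigr)$, and at this choice both $T_1$ and $\eps_S$ are of order $2^{-\frac13((\frac14-\lambda)n\mu-3\log(m)-q-O(1))}$. This is exactly the form of the first summand of the claimed $\eps$; the second summand comes directly from the (already $\ell$-independent) second summand of $\eps_U$ after substituting $d=n\mu$, with the small additive constants absorbed when one combines the two proposition errors into a single $\eps$ (e.g.\ by taking their sum or their maximum, as permitted by the definition of $\eps$-security against impersonation).

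Finally, the asymptotic ``in particular'' statement follows by noting that $\log(m)=o(n)$ forces $\mu\to h^{-1}(1)=\tfrac12$, so that (after choosing $\lambda$ sufficiently small) the first exponent of $\eps$ is dominated by $\tfrac13\bigl(\tfrac14\cdot\tfrac{n}{2}-q\bigr)=\tfrac{1}{24}(n-8q)$ up to lower-order terms, and a similar estimate controls the second exponent, so $\eps$ is negligible in $n-8q$. The only real work is the careful bookkeeping of the additive constants arising from the integer rounding of $\ell$, from the exact form of the Gilbert--Varshamov construction, and from combining the two error expressions; no new technical ingredients are needed beyond the two propositions and standard coding-theoretic existence results.
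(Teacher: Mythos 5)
Your proposal is correct and follows essentially the same route as the paper: instantiate $\code$ via the Gilbert--Varshamov bound so that $d$ is within $1$ of $n\mu$, pick $\ell=\frac13\bigl((\frac14-\lambda)d+3\log(m)-q-1\bigr)$ (which is exactly what your balancing of $T_1(\ell)$ against $m^2/2^\ell$ yields), and then read off the two summands from Propositions~\ref{prop:Usec} and~\ref{prop:Ssec}, with the $\frac{1}{24}(n-8q)$ estimate giving the asymptotic claim. The only remaining work is the constant bookkeeping you already flag (e.g.\ the shift from $-1$ to $-2$ and from $-3$ to $-4$ in the exponents absorbs the replacement $d=n\mu-1$).
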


\begin{proof}  %
  We choose $\ell = \frac13\big((\frac14-\lambda)d +3 \log(m) - q - 1
  \big)$. Then user security holds except with an error $\eps =
  2^{-\frac13((\frac14 - \lambda) d - 3\log(m) - q - 1)} +
  2^{-(\sigma(\lambda) d - 2\ln(m) - 3)}$, and server security holds
  except with an error $m^2/2^\ell = 2^{-\frac13((\frac14 - \lambda) d
    - 3\log(m) - q - 1)}$. Using a code $\code$, which asymptotically
  meets the Gilbert-Varshamov bound~\cite{Tho83}, $d$ may be chosen
  arbitrarily close to $n \cdot h^{-1}\bigl(1-\log(m)/n\bigr)$.  In
  particular, we can ensure that $d$ differs from this value by at
  most 1. Inserting $d= n \cdot h^{-1}\bigl(1-\log(m)/n\bigr) -1$ in
  the expression for user security yields the theorem.  \qed
\end{proof}

\subsection{Mutual Identification}
 
In order to obtain {\em mutual} identification, where also the server
identifies himself towards the user, one could of course simply run
\QID in both directions: say, first $\U$ identifies himself to $\S$,
and then $\S$ identifies himself to $\U$ (by exchanging their roles in
\QID).  However, this scheme allows the dishonest server to exclude
{\em two} possible keys $w \in {\cal W}$ per invocation, and it
requires to also assume the {\em user's} quantum memory to be bounded,
and has doubled complexity.

We briefly sketch an approach that circumvents these drawbacks of the
trivial solution: In the original \QID scheme, instead of announcing
$z = f(x|_{I_w}) \oplus g(w)$, $\U$ announces a {\em noisy version}
$\tilde{z}$, obtained from $z$ by flipping each bit of $z$
independently with some small probability; this still allows $\S$ to
verify if $\U$ knows $w$ by testing if $\tilde{z}$ is ``close" to
$z'$, and $\S$ has then to prove knowledge of $w$ by announcing to
$\U$ the positions where $\U$ flipped the bits.

Security against a dishonest user still holds (with a slightly larger error probability) since the uniformity of the $S_j$'s, as defined in the proof, also guarantees that the $S_j$'s are pair-wise ``far apart" so that $W'$ is still uniquely determined by $\tilde{Z}$. And security against a dishonest server follows from the fact that if $W' \neq W$ then $Z$ is (essentially) uniformly distributed and thus given its noisy version 
$\tilde{Z}$ the server can at best guess the positions of the bit-flips, which are independent of $W$.

\subsection{An Error-tolerant Scheme}\label{sec:noise}

We now consider an imperfect quantum channel with ``error rate''
$\phi$. The scheme \QID is sensitive to such errors in that they cause
$x|_{I_w}$ and $x'|_{I_w}$ to be different and thus an honest server
$\S$ is likely to reject an honest user $\U$. This problem can be
overcome by means of error-correcting techniques: $\U$ chooses a
linear error-correcting code that allows to correct a $\phi$-fraction
of errors, and then in step~\ref{it:measure}, in addition to $\theta$
and $f$, $\U$ sends a description of the code and the syndrome $s$ of
$x|_{I_w}$ to $\S$; this additional information allows $\S$ to recover
$x|_{I_w}$ from its noisy version $x'|_{I_w}$ by standard techniques.
However, this technique introduces a new problem: the syndrome $s$ of $x|_{I_w}$
may give information on $w $ to a dishonest server. Hence, to
circumvent this problem, the code chosen by $\U$ must have the
additional property that for a dishonest user, who has high
min-entropy on $x|_{I_w}$, the syndrome $s$ is (close to) independent
of $w$.

This problem has been addressed and solved in the classical setting by
Dodis and Smith~\cite{DS05}, and subsequently in the quantum
setting in~\cite{FS08}. Dodis and Smith present a family of efficiently
decodable linear codes allowing to correct a constant fraction of
errors, and where the syndrome of a string is close to uniform if the
string has enough min-entropy and the code is chosen at random from
the family.
Specifically, Lemma~5 of~\cite{DS05} guarantees that for every $0 < \lambda < 1$ and for an infinite number of $n'$'s there exists a {\em $\delta$-biased} (as defined in~\cite{DS05}) family ${\cal C} = \set{C_j}_{j \in \cal J}$ of $[n',k',d']_2$-codes with $\delta < 2^{-\lambda n'/2}$, and which allows to efficiently correct a constant fraction of errors. 
Furthermore, Theorem~3.2 of \cite{FS08} (which generalizes Lemma~4 in~\cite{DS05} to the quantum setting) guarantees that if a string $Y$ has $t$ bits of min-entropy\footnote{\cite{FS08} does not consider {\em smooth} min-entropy, but it is not too hard to see that their results also hold for the smooth version. } 
then for a randomly chosen code $C_j \in \cal C$, the syndrome of $Y$ is close to random and independent of $j$ and any $q$-qubit state that may depend on $Y$, where the closeness is given by 
$\delta \cdot 2^{(n' + q - t)/2}$. 
In our application, $Y = X_W$, $n' \approx n/2$ and $t \approx d/4 - \log(m) - \ell$, where the additional loss of $\ell$ bits of entropy comes from learning the $\ell$-bit string $z$.
Choosing $\lambda = 1-\frac{t}{2 n'}$ gives an ensemble of code families that allow to correct a linear number of errors and the syndrome is $\eps$-close to uniform given the quantum state, where $\eps \leq 2^{-n'/2+t/4} \cdot 2^{(n' + q - t)/2} = 2^{-(t-2q)/4}$, which is exponentially small provided that there is a linear gap between $t$ and $2q$. 
Thus, the syndrome gives essentially no additional information. 
The error rate $\phi$ that can be tolerated this way depends in a rather complicated way on $\lambda$, but choosing $\lambda$ larger, for instance $\lambda = 1-\frac{t + \nu q}{2 n'}$ for a constant $\nu > 0$, allows to tolerate a higher error rate but requires $q$ to be a smaller (but still constant) fraction of $t$.

Another imperfection has to be taken into account in current 
implementations of the quantum channel: imperfect sources. An imperfect
source transmits more than one qubit in the same state with probability $\eta$
independently each time a new transmission takes place. 
To deal with imperfect sources, we freely give away $(x_i,\theta_i)$ to the adversary when
a multi-qubit transmission occurs in position $i$. It is not difficult to see that
parameter $\varepsilon$ in Proposition~\ref{prop:Usec} then changes in that $d$ is replaced by $(1-\eta)d$.  

It follows that
a quantum channel with error-rate $\phi$ and multi-pulse rate $\eta$, called
the $(\phi,\eta)$-weak quantum model in~\cite{DFSS05}, can be tolerated
for some small enough (but constant) $\phi$ and~$\eta$.

\section{Defeating Man-in-the-Middle Attacks}
\label{mim}

\subsection{The Approach}\label{sec:xapproach}

In the previous section, we ``only'' proved security against impersonation attacks, but we did not consider a man-in-the-middle attack, where the attacker sits between an honest user and an honest server and controls their (quantum and classical) communication. And indeed, \QID is highly insecure against such an attack: the attacker may measure the first qubit in, say, basis $+$, and then forward the collapsed qubit (together with the remaining untouched ones) and observe if $\S$ accepts the session. If not, then the attacker knows that he introduced an error and hence that the first qubit must have been encoded and measured using the $\times$-basis, which gives him one bit of information on the key $w $. The error-tolerant scheme seems to prevent this particular attack, but it is by no means clear that it is secure against {\em any} man-in-the-middle attack.

To defeat a man-in-the-middle attack that tampers with the quantum
communication, we perform a check of correctness on a random subset.
The check allows to detect if the attacker tampers too much with the
quantum communication, and the scheme can be aborted before sensitive
information is leaked to the attacker.  In order to protect the
classical communication, one might use a standard
information-theoretic authentication code. However, the key for such a
code can only be securely used a limited number of times. A similar
problem occurs in QKD: even though a successful QKD execution produces
fresh key material that can be used in the next execution, the
attacker can have the parties run out of authentication keys by
repeatedly enforcing the executions to fail. In order to overcome this
problem, we will use some special
authentication scheme allowing to re-use the key under certain
circumstances, as discussed in Sect.~\ref{sec:EXTR-MAC}.

\subsection{The Setting}

Similar to before, we assume that the user $\U$ and the server $\S$
share a not necessarily uniform, low-entropy key $w $. In order to
handle the stronger security requirements of this section, we have
to assume that $\U$ and $\S$ in addition share a uniform high-entropy
key $k$. We require that a man-in-the-middle attacker can do no better
that making a guess $w'$ at $w$, and if his guess is
incorrect then he learns no more information on $w $ besides that his
guess is wrong, and essentially no information on $k$. More formally:

\begin{definition}
We say that an identification protocol is
secure against man-in-the-middle attacks by $\sf E$ with error $\eps$ if,
whenever the initial state of $\sf E$ 
is independent of the keys $W$ and $K$,
there exists $W'$, independent of $W$, such that 
the common state $\rho_{K W \regE}$ after the execution of the protocol satisfies
$$
\rho_{K W W' \regE|W' \neq W} \epsclose \rho_K \otimes \rho_{W\leftrightarrow W'\leftrightarrow  \regE| W' \neq W}.
$$
\end{definition}

Furthermore, we require security against impersonation attacks, as defined in the previous section, {\em even if the
  dishonest party knows~$k$}. It follows that $k$ can for instance
be stored on a smart card, and security is still guaranteed even
if the smart card gets stolen, assuming that the theft is noticed and
the corresponding party does/can not execute the scheme anymore. We
would also like to stress that by our security notion, not only $w $
but also $k$ may be safely reused, even if the scheme was under
attack.

\subsection{An Additional Tool: Extractor MACs}\label{sec:EXTR-MAC}

An important tool used in this section is an authentication scheme,
i.e., a Message Authentication Code (MAC), that also acts as an
extractor, meaning that if there is high min-entropy in the message,
then the key-tag pair cannot be distinguished from the key and a
random tag. Such a MAC, introduced in \cite{DKRS06}, is called an
extractor MAC, EXTR-MAC for short.  For instance
\smash{$MAC^*_{\alpha,\beta}(x) = [\alpha x]+\beta$}, where $\alpha,x
\in GF(2^n)$, $\beta \in GF(2^\ell)$ and $[\,.\,]$, denotes truncation
to the $\ell$ first bits, is an EXTR-MAC: impersonation and substitution
probability are $1/2^{\ell}$, and, for an arbitrary message $X$ and ``side information'' $Z$, a
random key $K = (A,B)$ and the corresponding tag $T = [A\cdot X]+B$,
the tuple $(T,K,Z)$ is \smash{$\big(\frac12 \cdot 2^{-\frac12(\Hmin[\eps](X|Z)-\ell)}+2\eps\big)$}-close to $(U,K,Z)$,
where $U$ is the uniform distribution, respectively, $\rho_{T K Z
  \regE}$ is \smash{$\big(\frac12 \cdot 2^{-\frac12(\Hmin[\eps](X|Z)-q-\ell)}+2\eps\big)$}-close to
${\textstyle\frac{1}{2^\ell}}\I \otimes \rho_{K Z \regE} =
{\textstyle\frac{1}{2^\ell}}\I \otimes \rho_K \otimes \rho_{Z \regE}$ if
we allow a $q$-qubit state $\regE$ that may depend only on~$X$ and $Z$. A
useful feature of an EXTR-MAC is that if an adversary gets to see the
tag of a message on which he has high min-entropy, then the key for
the MAC can be safely re-used (sequentially). Indeed, closeness of the
real state, $\rho_{T K \regE}$, to the ideal state,
${\textstyle\frac{1}{2^\ell}}\I \otimes \rho_{K \regE} =
{\textstyle\frac{1}{2^\ell}}\I \otimes \rho_K \otimes \rho_{\regE}$ ,
means that no matter how the state evolves, the real state behaves
like the ideal one (except with small probability), but of course in
the ideal state, $K$ is still ``fresh'' and can be reused.

\subsection{The Scheme}

As for \QID, let $\code:\W\rightarrow\set{+,\times}^n$ be the encoding function of a binary code of length $n$ 
with $m = |\W|$ codewords and minimal distance $d$, and for parameter $\ell$, let ${\cal F}$ and ${\cal G}$ be strongly universal-2 classes of hash functions from $\set{0,1}^n$ to $\set{0,1}^{\ell}$ and $\W$ to $\set{0,1}^{\ell}$, respectively.
Also, let $MAC^*:{\cal K} \times {\cal M} \rightarrow \set{0,1}^{\ell}$ be an EXTR-MAC with an arbitrary key space $\cal K$, a message space $\cal M$ that will become clear later, and an error probability $2^{-\ell}$. 
Furthermore, let $\set{syn_j}_{j \in \cal J}$ be the family of syndrome functions\footnote{We agree on the following convention: for a bit string $y$ of arbitrary length, $syn_j(y)$ is to be understood as $syn_j(y0\cdots0)$ with enough padded zeros if its bit length is smaller than $n'$, and as $\big(syn_j(y'),y''\big)$, where $y'$ consist of the first $n'$ and $y''$ of the remaining bits of $y$, if its bit length is bigger than $n'$. } corresponding to a family ${\cal C} = \set{C_j}_{j \in \cal J}$ of linear error correcting codes of size $n' = n/2$, as discussed in Section~\ref{sec:noise}: any $C_j$ allows to efficiently correct a $\delta$-fraction of errors for some constant $\delta > 0$, and for a random $j \in \cal J$, the syndrome of a string with $t = (\frac14-\lambda)d - \log(m) - 3\ell$ bits of min-entropy is $2^{-(t-2q)/4}$-close to uniform (given $j$ and any $q$-qubit state) for some $\lambda > 0$.

Recall, by the set-up assumption, the user $\U$ and the server $\S$
share a password $w \in \W$ as well as a uniform high-entropy key,
which we define to be a random authentication key $k \in \cal K$.  The
resulting scheme \xQID is given in the box below.

\begin{myfigure}{h}
\begin{myprotocol}{\xQID}
\item\label{it:xprepare} $\U$ picks $x \in_R \set{0,1}^n$ and $\theta \in_R \{+,\times \}^n$, and sends the $n$-qubit state $\ket{x}_\theta$ to $\S$. 
Write $I_w \assign  \Set{i}{\theta_i \!=\! \code(w)_i}$. 
\item\label{it:xmeasure} $\S$ picks a random subset $T \subset \set{1,\ldots,n}$ of size $\ell$, it computes $c = \code(w)$, replaces every $c_i$ with $i \in T$ by $c_i \in_R\set{+,\times}$ and measures $\ket{x}_\theta$ in basis $c$. Let $x'$ be the outcome, and let $test' \assign x'|_T$. 
 \item\label{xbound-applies} $\U$ sends $\theta$, $j \in_R \cal J$, $s
 \assign syn_j(x|_{I_w})$, and $f \in_R {\cal F}$ to $\S$.  
\item $\S$ picks $g \in {\cal G}$, and sends $T$ and $g$ to $\U$. 
\item $\U$ sends $test \assign x|_T$, $z \assign f(x|_{I_w}) \oplus
  g(w)$ and $tag^* \assign MAC^*_{k}(\theta,j,s,f,g,T,test,z,x|_{I_w})$ to $\S$. 
\item $\S$ recovers $x|_{I_w}$ from $x'|_{I_w}$ with the help of $test$ and $s$, and it accepts if and only if (1) $tag^*$ verifies correctly, (2) $test$ coincides with $test'$ wherever the bases coincide, and (3) $z = f(x|_{I_w}) \oplus g(w)$. 
\end{myprotocol}
\end{myfigure}

\begin{proposition}[Security against man-in-the-middle]\label{prop:MiMA}
Assume that the quantum memory of $\E$ is of size at most $q$ qubits at step 3 of \xQID.
Then
\xQID is secure against man-in-the-middle attacks by $\sf E$ with error $\eps$, where 
 $$\eps = \negl\bigl(({\textstyle\frac14}-\lambda)d - \log(m) - 2 q - 3 \ell\bigr) + \negl\bigl(\sigma(\lambda) d - \log(m)\bigr) + \negl(\ell)
 $$ 
 for an arbitrary $0 < \lambda < \frac14$. 
\end{proposition}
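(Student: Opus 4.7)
The overall strategy is to reduce the man-in-the-middle setting to a situation that can be analyzed along the same lines as Proposition~\ref{prop:Usec}, by showing that, except with small probability, the attacker $\sf E$ is forced either to relay honestly or to be caught by the test or the MAC. As in Proposition~\ref{prop:Usec}, I would first switch to a \emph{purified} version of \xQID in which $\U$ prepares EPR pairs and only measures his half at step~\ref{xbound-applies}, when the memory bound has forced $\sf E$ to measure down to some $q$-qubit register $\regE'$. Fixing the classical outcome of $\sf E$'s measurement, the remaining state on $\U,\S,\sf E$ is an $(n+q)$-qubit pure state, to which the uncertainty relation (Theorem~\ref{thm:uncertainty}) and the Entropy Splitting Lemma~\ref{lemma:ES} apply as before: restricted to the $d$ positions where any two codewords $\code(i)\neq\code(j)$ differ, the outcome $X$ of $\U$'s basis-$\Theta$ measurement has $(\tfrac12-2\lambda)d$ bits of $\eps'$-smooth min-entropy given $\Theta$ with $\eps'\leq 2^{-\sigma(\lambda)d}$, and thus there exists $W'$ (independent of $W$) with $\Hmin[2m\eps'](X_W|VW\Theta,W'\!\neq\!W)\geq (\tfrac14-\lambda)d-\log(m)-1$.

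The next step is to bound the leakage about $X_W$ that $\sf E$ can extract by tampering with the classical exchange. By the choice of the code family $\{C_j\}$, the syndrome $s=syn_j(X|_{I_w})$ is $2^{-(t-2q)/4}$-close to uniform and independent of $\regE'$ given $j$, so $s$ leaks essentially nothing beyond its $\approx n'/2$ bits (accounted for in the parameter $t$). The test string $test=X|_T$ adds a further $\ell$-bit leak, and $z$ another $\ell$ bits, so $X_W$ still has roughly $(\tfrac14-\lambda)d-\log(m)-3\ell$ bits of smooth min-entropy conditioned on the adversary's view when $W'\neq W$. Define the two bad events: $\mathsf{Sub}$, that $\sf E$ substitutes any of the authenticated classical messages and the server still accepts the MAC; and $\mathsf{Test}$, that $\sf E$ tampers with the quantum channel in a way that changes $x|_{I_w}$ yet still passes the test on $T$. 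Since $T$ is uniformly random and revealed only after $\sf E$ has committed to the forwarded qubits, a Chernoff/sampling argument shows $P[\mathsf{Test}\wedge x|_{I_w}\text{ effectively altered}]$ is negligible in $\ell$; and the EXTR-MAC substitution bound gives $P[\mathsf{Sub}]\leq 2^{-\ell}$. Conditioned on $\overline{\mathsf{Sub}}\cap\overline{\mathsf{Test}}$, the attacker's action reduces to honest relaying, and we are back in the \QID setting.

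Privacy amplification (Theorem~\ref{thm:pa}) applied to the remaining min-entropy of $X_W$ then gives two conclusions simultaneously. First, $z=f(X_W)\oplus g(W)$ is close to uniform and independent of $(F,G,W',\Theta,s,test,\regE')$ conditioned on $W'\neq W$, so $W$ is hidden just as in Proposition~\ref{prop:Usec}. Second, and this is the part that is genuinely new relative to \QID, the extractor property of $MAC^*_k$ ensures that the tag $tag^*$ together with $K$ is close to the product of a uniform tag and an independent uniform $K$, provided the message $(\theta,j,s,f,g,T,test,z,x|_{I_W})$ retains sufficient min-entropy given $\sf E$'s view; this is guaranteed by the same min-entropy bound on $X_W$ after subtracting the $3\ell$ bits leaked through $s,test,z$. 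Hence $K$ remains essentially uniform and independent of $\regE$, yielding the required closeness of $\rho_{KWW'\regE|W'\neq W}$ to $\rho_K\otimes\rho_{W\leftrightarrow W'\leftrightarrow\regE|W'\neq W}$. Summing the three error contributions---the impersonation-style error $\negl((\tfrac14-\lambda)d-\log(m)-2q-3\ell)$, the uncertainty-relation smoothness $\negl(\sigma(\lambda)d-\log(m))$, and the test/MAC-forgery term $\negl(\ell)$---gives the claimed bound.

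The main obstacle I expect is the careful bookkeeping around the dual use of the MAC. On the one hand we must invoke the \emph{authentication} property of $MAC^*$ at every point where we want to exclude substitutions; on the other hand the \emph{extraction} property is what lets us conclude that $K$ can be safely reused, but extraction requires a min-entropy bound on the authenticated message that itself depends on the protocol not having been disrupted. Resolving this circularity requires a conditioning argument: one first conditions on the event $\overline{\mathsf{Sub}}$ to treat classical communication as authentic, then establishes the entropy bound on $X_W$ under this conditioning, and only then applies Theorem~\ref{thm:pa} to $MAC^*$ to recover independence of $K$ from $\regE$, using Lemma~\ref{lemma:decomp} to glue the conditional statements back into a statement about the unconditional state.
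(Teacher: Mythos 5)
Your first half --- purification, the uncertainty relation plus the Entropy Splitting Lemma to obtain $W'$ and the min-entropy bound on $X_W$ given $W'\neq W$, the $3\ell$-bit accounting for $tag^*$, $test$ and $z$, near-uniformity of the syndrome from the Dodis--Smith code family, and the use of the extractor property of $MAC^*$ to keep $K$ fresh --- matches the paper's proof and arrives at the right error terms. The gap is in how you treat the server's accept/reject decision, which is precisely the channel that makes \QID insecure against a man-in-the-middle in the first place (cf.\ the one-qubit attack in Section~\ref{sec:xapproach}). Your events $\mathsf{Sub}$ and $\mathsf{Test}$ only show that tampering which alters $x|_{I_w}$ is \emph{detected} with high probability; they do not show that the \emph{event of detection} is independent of $w$. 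An attacker can tamper at an intermediate level so that whether $\S$'s error correction of $x'|_{I_w}$ succeeds --- and hence whether the recovered string is correct, the MAC verifies, and $\S$ accepts --- depends on which positions lie in $I_w$, i.e.\ on $w$. Moreover, conditioning on $\overline{\mathsf{Test}}$, an event defined by the attack's success, can itself correlate with $W$, so ``reducing to honest relaying'' under that conditioning does not recover the unconditional statement without an argument like Lemma~\ref{lemma:decomp} applied to a $w$-independent event.

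The paper closes this gap with a step your plan is missing: writing $I'_w$ for the positions of $I_w$ that $\S$ measured in the correct basis and $V=\Set{i\in T}{\theta_i=c_i}$, a random-sampling bound shows that $\nu\bigl(x|_{I'_w},x'|_{I'_w}\bigr)$ is essentially upper bounded by $2\,\nu\bigl(x|_{V},x'|_{V}\bigr)$ except with probability negligible in $\ell$, and the right-hand side is a quantity that does not depend on $w$. A case distinction on whether $\nu\bigl(x|_{V},x'|_{V}\bigr)$ is below or above $\delta/2$ then shows that in the low-error case $\S$ correctly recovers $x|_{I_w}$ for \emph{every} $w$ (so its decision depends only on the attacker's behavior), while in the high-error case $\S$ rejects for \emph{every} $w$. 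Hence the decision is a function of $w$-independent data and leaks nothing further. Without this, your argument establishes hiding of $W$ and freshness of $K$ only for the state \emph{before} $\S$ announces its decision, which is not what the man-in-the-middle definition requires.
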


\begin{proof}
We use capital letters ($W$, $\Theta$, etc.) for the values ($w $,
$\theta$, etc.) occurring in the scheme whenever we view them as
random variables, and we write $X_W$ and $X'_W$ for the random
variables taking values $x|_{I_w}$ and $x'|_{I_w}$, respectively. To simplify the argument, we neglect error probabilities that are 
of order $\eps$, 
as well as linear fractions that can be chosen arbitrarily small. We merely give indication of a small error by (sometimes) using the word ``essentially''.  

First note that due to the security of the MAC and its key, if the attacker substitutes $\theta,j,s,f,g,T,test$ or $z$, or if $\S$ recovers an incorrect string as $x|_{I_w}$, then $\S$ will reject at the end of the protocol. 
We can define $W'$ (independent of $W$) as in the proof of Proposition~\ref{prop:Usec} such that if $W \neq W'$ then $X_W$ has essentially $d/4 - \log(m)$ bits of smooth min-entropy, given $W,W'$ and~$\Theta$. 
Furthermore, given $TAG^*,F(X_W),TEST$ (as well as
$K,F,T,W,W'$ and $\Theta$), $X_W$ has still essentially $t = d/4 -
\log(m) - 3\ell$ bits of smooth min-entropy, if $W \neq W'$. By the property
of the code family $\cal C$, it follows that if $t > 2q$ with a linear gap then the syndrome $S = syn_J(X_W)$ is essentially random and independent of $J,TAG^*,F(X_W),TEST,K,F,T,W,W',\Theta$ and $\regE$, conditioned on $W \neq W'$. Furthermore, it follows from the privacy-amplifying property
of $MAC^*$ and of $f$ that if $d/4 - \log(m) - 2\ell > q$ with a linear gap, then the set of values
$(TAG^*,F(X_W))$ is essentially random and independent of $K,F,TEST,T,W,W',\Theta$ and $\regE$, conditioned on $W \neq W'$. 
Finally, $K$ is independent of the rest, and $\regE$ is independent of $K,F,TEST,T,W,\Theta$. 
It follows that $\rho_{K W W' \regE|W' \neq W} \approx \rho_K \otimes \rho_{W\leftrightarrow W'\leftrightarrow  \regE| W' \neq W}$, {\em before} he learns $\S$'s decision to accept or reject. 

It remains to argue that $\S$'s decision does not give any additional information on $W$. 
We will make a case distinction, which does not depend on $w$, and we
will show for both cases that $\S$'s decision to accept or reject is
independent of $w$, which proves the claim.  But first, we need
the following observation.  Recall that outside of the test set $T$,
$\S$ measured in the bases dictated by $w$, but within $T$ in random
bases.  Let $I'_w$ be the subset of positions $i \in I_w$ with $c_i =
\code(w)_i$ (and thus also $=\theta_i$), and let $T' = T \cap I'_w$.
In other words, we remove the positions where $\S$ measured in the
``wrong'' basis. The size of $T'$ is essentially $\ell/4$, and given
its size, it is a random subset of $I'_w$ of size $|T'|$. It follows
from the theory of random sampling
that $\nu\bigr(x|_{I'_w},x'|_{I'_w}\bigr)$
essentially equals $\nu\bigl(x|_{T'},x'|_{T'}\bigr)$ (except with
probability negligible in the size of $T'$), where $\nu(\cdot,\cdot)$
denotes the fraction of errors between the two input strings.  
Furthermore,
since the set $V = \Set{i\in T}{\theta_i = c_i}$ of positions where $\U$
and $\S$ compare $x$ and $x'$ is a superset of $T'$ of essentially
twice the size, $\nu\bigl(x|_{V},x'|_{V}\bigr)$ is essentially lower
bounded by $\frac12\,\nu\bigl(x|_{T'},x'|_{T'}\bigr)$.  Putting things
together, we get that $\nu\bigr(x|_{I'_w},x'|_{I'_w}\bigr)$ is
essentially upper bounded by $2\,\nu\bigl(x|_{V},x'|_{V}\bigr)$. Also
note that $\nu\bigl(x|_{V},x'|_{V}\bigr)$ does not depend on $w$. We
can now do the case distinction:
{\it Case~1: } If $\nu\bigl(x|_{V},x'|_{V}\bigr) \leq \frac{\delta}{2}$ (minus an arbitrarily small value), then $x|_{I'_w}$ and $x'|_{I'_w}$ differ in at most a $\delta$-fraction of their positions, and thus $\S$ correctly recovers 
$x|_{I_w}$ (using $test = x|_T$ to get $x|_{I_w\setminus I'_w}$ and using $s$ to correct the rest), no matter what $w$ is, and it follows that $\S$'s decision only depends on the attacker's behavior, but not on $w$. 
{\it Case~2: } Otherwise, $\S$ is guaranteed to get the correct $test = x|_T$ (or else rejects) 
and thus rejects as $test$ and $test'$, restricted to $V$, differ 
in more than a $\frac{\delta}{2}$-fraction of their positions. Hence,
$\S$ always rejects in case~2.
\qed
\end{proof}
For a dishonest user or server who knows $k$ (but not $w$), breaking \xQID is equivalent to breaking \QID, up to a change in the parameters. Doing the maths on the parameters similarly to the proof of Theorem~\ref{thm:impersonation} (namely, choosing $\ell = \frac14\big((\frac14-\lambda)d+\log(m)-2q\big)$ whence $\eps = \negl\bigl((\frac14-\lambda)d-7\log(m)-2q\bigr)$),
it then follows:

\begin{theorem}
If $\Hmin(W) \geq 1$, then the identification scheme \xQID is $\eps$-secure against a man-in-the-middle attacker with quantum memory bound $q$, and, even with a leaked $k$, \xQID is $\eps$-secure against impersonation attacks for any unbounded user and for any server with quantum memory bound $q$, where 
$$\eps 
= \negl\bigl(({\textstyle\frac14}\!-\!\lambda) \mu n - 7\log(m) - 2q\bigr) + \negl\bigl(\sigma(\lambda) \mu n - \log(m)\bigr)
$$ 
for $\mu = h^{-1}(1-\log(m)/n)$ and an arbitrary $0 < \lambda < \frac14$.  
In particular, if $\log(m)$ is sublinear in $n$, $\eps$ is negligible in $n- 16q$.
\end{theorem}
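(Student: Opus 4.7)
The theorem packages two claims which share a common parameter choice. The plan is: obtain man-in-the-middle security by direct appeal to Proposition~\ref{prop:MiMA}; establish impersonation security with a leaked $k$ by reducing to the \QID analysis of Propositions~\ref{prop:Usec} and~\ref{prop:Ssec}, while carefully tracking the extra leakage from the syndrome $s$, the test string $test$, and the tag $tag^*$; and combine everything by the parameter choice $\ell = \tfrac14\bigl((\tfrac14-\lambda)d+\log(m)-2q\bigr)$ together with an asymptotically good code for $\code$.

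For the reduction to \QID when $k$ is leaked, the key observation is that the MAC then provides no secrecy: $tag^*$ is a deterministic function of quantities the honest party sends in the clear and of the known $k$. The server-security direction (dishonest $\dU$ knowing $k$) therefore proceeds essentially as in Proposition~\ref{prop:Ssec}, after adjoining $test, s$ and $tag^*$ to $\regU$ and defining $W'$ as the unique index with $S_{W'} = z$; the strong universal-2 property of $g$ still makes the $S_j$ pairwise distinct except with probability at most $m^2/2^\ell$. The user-security direction (dishonest $\dS$ knowing $k$) is where the extra bookkeeping bites: I would repeat the proof of Proposition~\ref{prop:Usec}, obtaining via the Entropy-Splitting Lemma and the uncertainty relation a $W'$ independent of $W$ such that, conditioned on $W'\neq W$, $X_W$ has essentially $(\tfrac14-\lambda)d - \log(m)$ bits of smooth min-entropy given $W,W',\Theta$; the revealed $test$ and $z$ consume a further $2\ell$ bits; the syndrome-uniformity property of the code family $\cal C$ imported from~\cite{DS05,FS08} then guarantees that $s$ is essentially uniform and independent of everything else as long as $(\tfrac14-\lambda)d - \log(m) - 2\ell - 2q$ has a linear gap; and privacy amplification on $z = f(X_W) \oplus g(W)$ delivers the conditional Markov-chain approximation required by Definition~\ref{def:usersecurity}.

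With $\ell = \tfrac14\bigl((\tfrac14-\lambda)d + \log(m) - 2q\bigr)$, the dominant term from Proposition~\ref{prop:MiMA}, namely $\negl\bigl((\tfrac14-\lambda)d-\log(m)-2q-3\ell\bigr)$, collapses to $\negl\bigl((\tfrac14-\lambda)d - 7\log(m) - 2q\bigr)$, and the residual $\negl(\ell)$ contribution is dominated by the same expression; the impersonation-with-leaked-$k$ errors obtained above also match up to constant factors under this choice. As in the proof of Theorem~\ref{thm:impersonation}, invoking a code family meeting the Gilbert-Varshamov bound~\cite{Tho83} lets us take $d = n\mu - 1$ with $\mu = h^{-1}(1-\log(m)/n)$; substituting into both error terms yields the stated $\eps$. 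When $\log(m)$ is sublinear in $n$ the binding constraint reduces to $(\tfrac14 - \lambda)\mu n > 2q$ with a linear gap, whence letting $\lambda$ tend to $0$ (while keeping $\sigma(\lambda)>0$) gives negligibility in $n - 16q$.

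The main obstacle I expect is the bookkeeping in the user-security reduction: one must apply privacy amplification on $z$ and the Dodis–Smith syndrome-uniformity lemma in a compatible order, so that $s$ remains (jointly) approximately uniform and independent given all of $W,W',K,F,G,T,\Theta,test,z$ and $\regS$ simultaneously, and so that the extra classical register $tag^*$ is properly accounted for when $k$ is in the adversary's hand. Once this is arranged, the rest of the argument is a matter of substituting parameters into bounds already proved.
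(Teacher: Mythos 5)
Your proposal matches the paper's (very terse) argument: man-in-the-middle security is cited directly from Proposition~\ref{prop:MiMA}, impersonation security with a leaked $k$ is reduced to the \QID analysis of Propositions~\ref{prop:Usec} and~\ref{prop:Ssec}, and the stated $\eps$ follows from the choice $\ell = \frac14\bigl((\frac14-\lambda)d+\log(m)-2q\bigr)$ combined with a Gilbert--Varshamov code giving $d$ close to $\mu n$, exactly as in Theorem~\ref{thm:impersonation}. One small nit that does not affect the conclusion: in the user-security bookkeeping the classical leakage about $x|_{I_w}$ is $3\ell$ bits ($test$, $z$, \emph{and} $tag^*$, since the tag is computed over the secret $x|_{I_w}$ even when $k$ is known), not $2\ell$ --- this is the $3\ell$ already built into Proposition~\ref{prop:MiMA} and into the value of $\ell$ you chose, and you do flag the need to account for $tag^*$ at the end.
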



It is easy to see that \xQID can tolerate a noisy quantum communication up to any error rate 
$\phi < \delta$. Similar to the discussion in Section~\ref{sec:noise}, tolerating a 
higher error rate requires the bound on the adversary's quantum memory to be smaller
but still linear in the number of qubits transmitted. Imperfect sources can also
be addressed in a similar way as for \QID. It follows that \xQID\ can also be shown secure in
the $(\phi,\eta)$-weak quantum model provided $\phi$ and $\eta$ are
small enough constants.
%

\section{Application to QKD}

As already pointed out in Section~\ref{sec:xapproach}, current QKD
schemes have the shortcoming that if there is no classical channel
available that is authenticated by physical means, and thus messages
need to be authenticated by an information-theoretic authentication
scheme, an attacker can force the parties to run out of
authentication keys simply by making an execution (or several
executions if the parties share more key material) fail. Even worse,
even if there is no attacker, but some execution(s) of the QKD scheme
fails due to a technical problem, parties could still run out of
authentication keys because it may not be possible to distinguish between
an active attack and a technical failure. This shortcoming could make the technology
impractical in situations where denial of service attacks or
technical interruptions often occur. 

The identification scheme \xQID from the previous section immediately gives a QKD scheme {\em in the bounded-quantum-storage model} that allows to re-use the authentications key(s). Actually, we can inherit the key-setting from \xQID, where there are two keys, a human-memorizable password and a uniform, high-entropy key, where security is still guaranteed even if the latter gets stolen and the theft is noticed. 
In order to agree on a secret key $sk$, the two parties execute \xQID, and extract $sk$ from $x|_{I_w}$ by applying yet another strongly universal-2 function, for instance chosen by $\U$ in step~\ref{xbound-applies} and authenticated together with the other information in Step 5. Here, $n$ needs to be increased accordingly to have the additional necessary amount of entropy in $x|_{I_w}$. The analysis of \xQID immediately implies that if honest $\S$ accepts, then he is convinced that he shares $sk$ with the legitimate $\U$ which knows $w$. In order to convince $\U$, $\S$ can then use part of $sk$ to one-time-pad encrypt $w$, and send it to $\U$. The rest of $sk$ is then a secure secret key, shared between $\U$ and $\S$. In order to have a better ``key rate'', instead of using $sk$ (minus the part used for the one-time-pad encryption) as secret key, one can also run a standard QKD scheme on top of \xQID and use $sk$ as a one-time authentication key.


\bibliographystyle{alpha}

\bibliography{crypto,qip,procs}


\begin{appendix}

\section{Proofs}

\subsection{Proof of Lemma~\ref{lemma:decomp}} \label{app:decomp}

Writing $p = P[\ev]$ and $\bar{p} = P[\bar{\ev}]$ we indeed get
\begin{align*}
\rho_{X\leftrightarrow Y \leftrightarrow \regE} 
&= \sum_{x,y} P_{XY}(x,y) \proj{x} \otimes \proj{y} \otimes \rho^{y}_{\regE} \\
&= \sum_{x,y} \big(p \cdot P_{XY|\ev}(x,y) +  \bar{p} \cdot P_{XY|\bar{\ev}}(x,y)\big) \proj{x} \otimes \proj{y} \otimes \big(p\cdot \rho^{y}_{\regE|\ev} + \bar{p} \cdot \rho^{y}_{\regE|\bar{\ev}}\big) \\
&= p^2\cdot\sum_{x,y} P_{XY|\ev}(x,y) \proj{x} \otimes \proj{y} \otimes \rho^{y}_{\regE|\ev}  + (1-p^2)\cdot\tau \\
&= p^2\cdot\rho_{X\leftrightarrow Y \leftrightarrow \regE|\ev} + (1-p^2)\cdot\tau
\end{align*}
for some density matrix $\tau$.
If $\ev$ is independent of $X$ and $Y$, so that $P_{XY} = P_{XY|\ev} = P_{XY|\bar{\ev}}$, then 
\begin{align*}
\rho_{X\leftrightarrow Y \leftrightarrow \regE} 
&= \sum_{x,y} P_{XY}(x,y) \proj{x} \otimes \proj{y} \otimes \rho^{y}_{\regE} \\
&= \sum_{x,y} P_{XY}(x,y) \proj{x} \otimes \proj{y} \otimes \big(p\cdot \rho^{y}_{\regE|\ev} + \bar{p} \cdot \rho^{y}_{\regE|\bar{\ev}}\big) \\
&= p\cdot\sum_{x,y} P_{XY|\ev}(x,y) \proj{x} \otimes \proj{y} \otimes \rho^{y}_{\regE|\ev}  + \bar{p}\cdot\sum_{x,y} P_{XY|\bar{\ev}}(x,y) \proj{x} \otimes \proj{y} \otimes \rho^{y}_{\regE|\bar{\ev}} \\
&= p\cdot\rho_{X\leftrightarrow Y \leftrightarrow \regE|\ev} + \bar{p}\cdot\rho_{X\leftrightarrow Y \leftrightarrow \regE|\ev} \,.
\end{align*}
\qed

\subsection{Proof of Lemma~\ref{lemma:ES}}\label{app:ES}

  For any pair $i \neq j$ let $\ev_{ij}$ be an event such that
  $P[\ev_{ij}] \geq 1 - \eps$ and 
  \begin{equation} \label{eq:assumption} \sum_z P_Z(z) \cdot
    \max_{x_i,x_j} P_{X_iX_j\ev_{ij}|Z}(x_i,x_j|z) \leq 2^{-\alpha}
  \end{equation}
  for all $x_i \in \mX_i$, $x_j \in \mX_j$ and $z \in \cal Z$. By
  assumption, such events exist.\footnote{In case $\eps = 0$, i.e.,
    $\alpha$ lower bounds the ordinary (rather then the smooth)
    min-entropy, the $\ev_{ij}$ are the events ``that always occur''
    and can be ignored from the rest of the analysis. } For any
  $j=1,\ldots,m-1$ define
$$
L_j =
\Set{(x_1,\ldots,x_m,z)}{P_{X_1|Z}(x_1|z),\ldots,P_{X_{j-1}|Z}(x_{j-1}|z)
  < 2^{-\alpha/2} \wedge P_{X_j|Z}(x_j|z) \geq 2^{-\alpha/2}}
$$
Informally, $L_j$ consists of the tuples $(x_1,\ldots,x_m,z)$, where
$x_j$ has ``large'' probability given $z$ whereas all previous entries
have small probabilities. We define $V$ as follows. We let $V$ be the
index $j \in \set{1,\ldots,m-1}$ such that $(X_1,\ldots,X_m,Z) \in
L_j$, and in case there is no such $j$ we let $V$ be $m$. Note that if
there does exist such an $j$ then it is unique.

We need to show that this $V$ satisfies the claim. Fix $j \in
\set{1,\ldots,m}$. Clearly, for $i < j$,
\begin{align} \begin{split} \label{eq:ESL1}
  \sum_z P_Z(z) \cdot \max_{x_i} P_{X_i V \ev_{ij} |Z}(x_i,j|z) &\leq
  \sum_z P_Z(z) \cdot \max_{x_i} P_{X_i V |Z}(x_i,j|z)\\
&=  \sum_z P_Z(z) \cdot \max_{x_i} P_{X_i|Z}(x_i|z) P_{V|X_i
    Z}(j|x_i,z) < 2^{-\alpha/2} \, .
\end{split} \end{align}
Indeed, either $P_{X_i|Z}(x_i|z) < 2^{-\alpha/2}$ or $P_{V|X_i
  Z}(j|x_i,z) = 0$ by definition of $V$. 
Consider now $i > j$. Note that  
\begin{align} \begin{split} \label{eq:ESL2}
  \sum_z P_Z(z) \cdot \max_{x_i} P_{X_iV\ev_{ij}|Z}(x_i,j|z) &= \sum_z
  P_Z(z) \cdot \max_{x_i} \sum_{x_j} P_{X_i X_j
    V\ev_{ij}|Z}(x_i,x_j,j|z)\\
  &\leq 2^{\alpha/2} \sum_z P_Z(z) \cdot \max_{x_i,x_j} P_{X_i
    X_j\ev_{ij}|Z}(x_i,x_j|z) \leq 2^{-\alpha/2} \, ,
\end{split} \end{align}
where the last inequality follows from the
assumption~\eqref{eq:assumption} and the first is a consequence of the
fact that the number of non-zero summands (in the sum over $x_j$)
cannot be larger than $2^{\alpha/2}$, because for any $x_j$ with
$P_{X_i X_j V \ev_{ij} | Z}(x_i,x_j,j|z) > 0$, it also holds that
$P_{X_j|Z}(x_j|z) \geq 2^{-\alpha/2}$ and the sum over all those $x_j$
would exceed 1 if there were more than $2^{\alpha/2}$ summands.
Note that per-se, $\ev_{ij}$ is only defined in the probability space
given by $X_i$, $X_j$ and $Z$, but it can be naturally extended to the
probability space given by $X_1,\ldots,X_n,Z,V$ by assuming it to be
independent of anything else when given $X_i,X_j,Z$, so that e.g.\
$P_{X_i V \ev_{ij}|Z}$ is indeed well-defined.

Consider now an independent random variable $W$ with $\Hmin(W) \geq 1$. 
By the assumptions on $W$ it holds that $P[V\!\neq\!W] \geq \frac12$
and $P_{X_W V W Z}(x_i,j,i,z) = P_{X_i V W Z}(x_i,j,i,z) = P_{X_i V
  Z}(x_i,j,z) P_W(i)$.  In the probability space determined by the
random variables $X_1,\ldots,X_n,V,W,Z$ and all of the events $\ev_{ij}$, define the event
$\ev$ as $\ev \assign \ev_{WV}$, so that $P_{X_W V W \ev|Z}(x_i,j,i|z) =
P_{X_i V W \ev_{ij}|Z}(x_i,j,i|z) = P_{X_i V \ev_{ij}|Z}(x_i,j|z)
P_W(i)$. Note that
$$
P[\bar{\ev}]
= \sum_{i,j} P_{VW \bar{\ev}_{WV}}(j,i)
= \sum_{i,j} P_{V \bar{\ev}_{ij}}(j) P_W(i)
\leq \sum_{i,j} P[\bar{\ev}_{ij}] P_W(i)
\leq m \eps
$$
and thus $P[\bar{\ev}|V\!\neq\!W] \leq P[\bar{\ev}]/P[V\!\neq\!W]
\leq 2m \eps$.  From the above, it follows that
\begin{align*}
  \guess&(X_W,\ev|VWZ,V\neq W) = \sum_{z,i,j} \max_x P_{X_W V W Z \ev|V \neq
    W}(x,j,i,z)
  \leq 2 \sum_{z,i \neq j} \max_x P_{X_W V W Z \ev}(x,j,i,z)\\
  &=2 \sum_{z,i \neq j} P_Z(z) \cdot \max_x P_{X_W V W \ev|Z}(x,j,i|z)
  = 2 \sum_{z,i \neq j} P_Z(z) \cdot \max_{x_i} P_{X_i V \ev_{ij}|Z}(x_i,j|z) \cdot P_W(i) \\
  &= 2 \sum_{i} P_W(i) \sum_{j\neq i} \sum_z P_Z(z) \cdot
  \max_{x_i} P_{X_i V \ev_{ij}|Z}(x_i,j|z) \leq 2m \cdot 2^{-\alpha/2} \, ,
\end{align*}
where we used \eqref{eq:ESL1} and \eqref{eq:ESL2} in the last
inequality.
The claim now follows by definition of $\Hmin$.
\qed

\end{appendix}

\end{document}